\documentclass[12pt]{article}
\usepackage{amssymb}
\usepackage{amsfonts}
\usepackage{amsmath}

\newtheorem{theorem}{Theorem}

\newtheorem{definition}{Definition}

\newtheorem{lemma}{Lemma}

\newtheorem{remark}{Remark}

\begin{document}

\title{Stable Improvement Cycle Mechanism Versus Efficiency-adjusted
Deferred Acceptance Mechanism beyond Weak Priority Orders\thanks{%
This paper builds on some of the results contained in an earlier manuscript
entitled \textquotedblleft School Choice with Multiple
Priorities,\textquotedblright\ while adding new results and substantially
revising. This work was supported by JSPS KAKENHI Grant Number 25K05004.}}
\author{Minoru Kitahara\thanks{%
Department of Economics, Osaka Metropolitan University} \and Yasunori Okumura%
\thanks{%
Corresponding author. Department of Logistics and Information Engineering,
TUMSAT, 2-1-6, Etchujima, Koto-ku, Tokyo, 135-8533 Japan.
Phone:+81-3-5245-7300. Fax:+81-3-5245-7300. E-mail: yokumu0@kaiyodai.ac.jp}}
\maketitle

\begin{center}
\textbf{Abstract}
\end{center}

This paper compares the stable improvement cycles (SIC) mechanism of Erdil
and Ergin (2008) with the efficiency-adjusted deferred acceptance (EADA)
mechanism of Kesten (2010) in school choice problems with incomplete
priorities. When school priorities are strict partial orders, we show that
the EADA mechanism produces a constrained efficient matching. Combined with
earlier results for the SIC mechanism, this establishes the outcome
equivalence of the two mechanisms: the sets of matchings attainable under
them coincide. We then move beyond transitivity and consider acyclic
priority relations. On this broader domain, neither mechanism is guaranteed
to produce a constrained efficient matching, but their failures take
opposite forms. The SIC mechanism may under-improve: its outcome is always
stable but may be Pareto dominated by another stable matching. By contrast,
the EADA mechanism may over-improve: its outcome is never Pareto dominated
by any stable matching but may itself be unstable.

\textbf{Keywords}: Stable improvement cycles; Efficiency-adjusted deferred
acceptance; Constrained efficiency; Incomplete priorities

\textbf{JEL classification}: C78, D47, D61\newpage

\section{Introduction}

A central theme in the school-choice literature is the trade-off between
stability and student welfare; see, for example, Abdulkadiro\u{g}lu and S%
\"{o}nmez (2003), Erdil and Ergin (2008), Abdulkadiro\u{g}lu et al. (2009),
and Kesten (2010). More specifically, stability requires schools' priorities
to be respected, but this requirement limits the extent to which assignments
can be determined in accordance with students' preferences. This perspective
suggests that student welfare may be improved not only by relaxing existing
priority constraints but also by avoiding artificial refinements of
priorities that introduce unnecessary constraints.

These observations led to two influential mechanisms: Kesten's (2010)
efficiency-adjusted deferred acceptance mechanism, or EADA, and Erdil and
Ergin's (2008) stable improvement cycles mechanism, or SIC. Although these
mechanisms differ substantially in their procedures, they are known to
produce closely related and, in some cases, identical outcomes. By reviewing
existing results and establishing new ones, this paper compares the two
mechanisms and clarifies when and in what sense they coincide and when they
diverge.

To clarify the relationship between the two mechanisms, it is essential to
consider a broader class of school priorities than is commonly assumed. In
the classical matching model of Gale and Shapley (1962), the rankings on
both sides are interpreted as preferences. It is therefore natural to assume
that schools' rankings of students are linear orders.

In school choice, by contrast, the school-side ordering is not a preference
of the school. Rather, it is a priority relation generated by policy rules
specified by the school-choice system. Such rules may assign equal priority
to multiple students. For example, students who have the same walk-zone and
sibling status may be tied in a school's priority relation. This possibility
is central to Erdil and Ergin (2008) and Abdulkadiro\u{g}lu et al. (2009),
who consider school choice problems in which school priorities are
represented by weak orders and may therefore contain ties.

A stable matching obtained after breaking such ties is also stable under the
original weak priorities. However, it need not be constrained efficient: it
may be Pareto dominated by another matching that is stable under the
original priorities. Thus, artificially breaking ties may introduce
unnecessary priority constraints and lead to an inefficient outcome within
the original stable set. This observation motivated mechanisms that select
constrained efficient matchings, namely, stable matchings that are not
Pareto dominated by any other stable matching.

Erdil and Ergin (2008) introduce the SIC mechanism to obtain a constrained
efficient matching when school priorities are weak orders. The mechanism has
subsequently been applied or extended in several related contexts, including
Kesten and \"{U}nver (2015), Dur et al. (2019), Erdil and Kumano (2019),
Kitahara and Okumura (2020, 2021), and Bando et al. (2025).

Kesten (2010) introduces the EADA mechanism to improve efficiency by
allowing students to consent to certain priority violations. He also shows
that, when school priorities are weak orders, the EADA mechanism can be used
to obtain a constrained efficient matching. The mechanism has since been
applied in a broader range of settings; see Cerrone et al. (2024) for an
overview of this literature.

More recently, several studies have moved beyond weak priorities and
considered settings in which school priorities allow richer forms of
incomparability among students. Examples include Che et al. (2019a,b), Dur
et al. (2019), Kitahara and Okumura (2020, 2021, 2024), and Kuvalekar
(2023). This paper also discusses examples involving potentially important
school-choice settings in which priority relations cannot be represented as
weak orders, thereby providing further motivation for studying richer
priority domains.

Among these studies, Kitahara and Okumura (2021) extend the applicability of
the SIC mechanism to priorities represented by strict partial orders. They
show that, starting from any stable matching that is not constrained
efficient, the mechanism yields a constrained efficient matching that Pareto
dominates the initial matching. Building on this line of research, the
present paper establishes an analogous result for EADA: when priorities are
strict partial orders, EADA also yields a constrained efficient matching.
Combined with the characterization of Kitahara and Okumura (2024), this
result implies that the sets of outcomes attainable under the SIC and EADA
mechanisms coincide.

The class of strict partial order priorities can be further generalized by
relaxing transitivity while retaining acyclicity. Acyclic but non-transitive
priorities are not merely a theoretical possibility. They arise, for
example, in the general consent model introduced later in this paper, which
extends the models studied by Kesten (2010) and Dur et al. (2019). In this
model, students may waive their priority claims in favor of particular other
students. As we discuss in detail later, without appropriate restrictions on
consent, the induced priority relations may remain acyclic but fail to be
transitive.

Although stable matchings continue to exist under acyclic priorities, the
constrained efficiency results for SIC and EADA do not extend to this
broader domain. We provide two examples with acyclic but non-transitive
priorities. The first shows that the SIC mechanism may fail to yield a
constrained efficient matching, while the second establishes an analogous
failure for EADA. The two mechanisms, however, fail in opposite ways. The
SIC mechanism may exhibit \textquotedblleft
under-improvement\textquotedblright : its outcome remains stable but is
Pareto dominated by another stable matching. By contrast, the EADA mechanism
may exhibit \textquotedblleft over-improvement\textquotedblright : no stable
matching Pareto dominates its outcome, but the outcome itself is not stable.
These contrasting failures clarify the roles played by transitivity in the
two mechanisms.

\section{Model}

Let $I$ and $S$ be the sets of students and schools, respectively, where $%
\left\vert I\right\vert \geq 3$. Each student $i\in I$ has a linear order on 
$S\cup \left\{ \emptyset \right\} $ denoted by $P_{i}$ representing the
preference order of student $i$, where $sP_{i}s^{\prime }$ (or equivalently $%
\left( s,s^{\prime }\right) \in P_{i}$) means that $i$ prefers $s\in S\cup
\left\{ \emptyset \right\} $ to $s^{\prime }\in S\cup \left\{ \emptyset
\right\} $ and $\emptyset $ represents their best outside option. If $%
sP_{i}\emptyset $, then school $s$ is said to be \textbf{acceptable} for $i$%
. Further, $sR_{i}s^{\prime }$ means $sP_{i}s^{\prime }$ or $s=s^{\prime }$.
Let $\mathbb{P}$ be the set of all possible linear orders on $S\cup \left\{
\emptyset \right\} $.

Each school $s$ has a capacity constraint represented by $q_{s}\in \mathbb{Z}%
_{++}$ and moreover, $q=\left( q_{s}\right) _{s\in S}.$ Let $\succ _{s}$ be
an asymmetric binary relation on $I$ representing the priority for school $s$%
, where $i\succ _{s}j$ (or equivalently $\left( i,j\right) \in \succ _{s}$)
means that $i$ has a higher priority than $j$ for school $s$. As discussed
in detail in Section 2.1, unlike students' preference orders, schools'
priority relations are not necessarily linear orders. Let $\mathcal{B}$ be
the set of all possible asymmetric binary relations on $I$. Hereafter, we
call $\succ _{s}\in \mathcal{B}$ a \textbf{priority relation}. Let $\succ
=\left( \succ _{s}\right) _{s\in S}$.

We let 
\begin{equation*}
G=\left( I,S,P,\succ ,q\right)
\end{equation*}%
be a \textbf{school choice problem}.

A \textbf{matching} $\mu $ is a mapping satisfying $\mu (i)\in S\cup
\{\emptyset \},$ $\mu \left( s\right) \subseteq I$, and $\mu (i)=s$ if and
only if $i\in \mu \left( s\right) ,$ and $\left\vert \mu \left( s\right)
\right\vert \leq q_{s}$ for all $s\in S$. Note that $\mu (i)=\emptyset $
means that $i$ is unmatched to any school and $\mu (i)=s\in S$ means that $i$
is matched to $s$ under a matching $\mu $.

A matching $\mu $ is said to be \textbf{individually rational} if $\mu
\left( i\right) R_{i}\emptyset $ for all $i\in I$. A matching $\mu $ is said
to be \textbf{non-wasteful} if $sP_{i}\mu \left( i\right) $ implies $%
\left\vert \mu \left( s\right) \right\vert =q_{s}$ for all $i\in I$ and all $%
s\in S$. A matching $\mu $ \textbf{violates} the priority $\succ _{s}$\ of $%
i\notin \mu \left( s\right) $ over $j\in \mu \left( s\right) $ if $sP_{i}\mu
\left( i\right) \ $and $\left( i,j\right) \in \succ _{s}$. If a matching $%
\mu $ does not violate the priority $\succ _{s}$ for any $s\in S$, then it
is said to be \textbf{fair }for $\succ $. A matching $\mu $ is \textbf{stable%
} for $\succ $ if it is individually rational, non-wasteful and fair for $%
\succ $.\footnote{%
Although the formally correct notation would be \textquotedblleft fair for $%
G $\textquotedblright\ and \textquotedblleft stable for $G$%
\textquotedblright , we write \textquotedblleft fair for $\succ $%
\textquotedblright\ and \textquotedblleft stable for $\succ $%
\textquotedblright\ as a shorthand, since all components of $G$ other than $%
\succ $ are fixed.}

A matching $\mu $ \textbf{is} \textbf{Pareto dominated} \textbf{by} $\mu
^{\prime }$ if $\mu ^{\prime }\left( i\right) P_{i}\mu \left( i\right) $\
for some $i\in I$ and $\mu ^{\prime }\left( i^{\prime }\right) R_{i^{\prime
}}\mu \left( i^{\prime }\right) $ for all $i^{\prime }\in I$. Moreover, a
matching $\mu $ \textbf{is weakly} \textbf{Pareto dominated} \textbf{by} $%
\mu ^{\prime }$ if it is Pareto dominated by $\mu ^{\prime }$ or $\mu \left(
i\right) =\mu ^{\prime }\left( i\right) $ for all $i\in I$. A matching $\mu $
is \textbf{constrained efficient} for $\succ $\textbf{\ }if it is stable for 
$\succ $ and is not Pareto dominated by any stable matching for $\succ $.
Let $\mathcal{M}^{\ast }\left( \succ \right) $ be the set of all possible
constrained efficient matchings for $\succ $.

By definition of constrained efficiency, efficiency is considered only
within the set of matchings that are stable for $\succ $. Hence, the
priority structure $\succ $ constrains the set of matchings among which
efficiency is assessed. The weaker $\succ $ is at a school $s$, in the sense
that it imposes fewer priority comparisons, the less restrictive this
constraint becomes and the easier it is to attain efficiency. Conversely,
imposing unnecessary refinements such as arbitrary tie-breaking makes the
priority structure more restrictive and may rule out Pareto-superior stable
matchings.

\subsection{Priority orders}

We define the following properties of binary relations. A priority relation
for $s$ denoted by $\succ _{s}$ is

\begin{description}
\item \textbf{irreflexive} if $\left( i,i\right) \notin \succ _{s}$ for all $%
i\in I,$

\item \textbf{asymmetric} if $\left( i,j\right) \in \succ _{s}$ implies $%
\left( j,i\right) \notin \succ _{s}$, for all $i,j\in I,$

\item \textbf{total}\textit{\ }if\textit{\ }$i\neq j$ implies\textit{\ }$%
\left( i,j\right) \in \succ _{s}$ or $\left( j,i\right) \in \succ _{s}$ for
all $i,j\in I$,

\item \textbf{transitive} if $\left( i,j\right) \in \succ _{s}$ and $\left(
j,k\right) \in \succ _{s}$ imply $\left( i,k\right) \in \succ _{s}$, for all 
$i,j,k\in I,$

\item \textbf{negatively} \textbf{transitive}\textit{\ }if $\left(
i,j\right) \notin \succ _{s}$ and $\left( j,k\right) \notin \succ _{s}$
imply $\left( i,k\right) \notin \succ _{s}$, for all $i,j,k\in I$,

\item \textbf{acyclic }if for all $K\in \left\{ 1,2,\cdots \right\} $ and
for all $i_{0},i_{1},\cdots ,i_{K}\in I$, $\left( i_{k-1},i_{k}\right) \in
\succ _{s}$ and $\left( i_{k},i_{k-1}\right) \notin \succ _{s}$ for all $%
k\in \left\{ 1,\cdots ,K\right\} $ imply $\left( i_{K},i_{0}\right) \notin
\succ _{s}$.
\end{description}

Following Duggan (1999), we consider the following specific binary relations.

\begin{itemize}
\item A strict \textbf{linear order} is defined as an asymmetric,
transitive, and total binary relation.

\item A strict \textbf{weak order} is defined as a negatively transitive and
asymmetric binary relation.

\item A strict \textbf{partial order} is defined as a transitive and
asymmetric binary relation.
\end{itemize}

Throughout the paper, we consider only asymmetric binary priority relations.
In other words, we impose the requirement that, for any school $s$, if
student $i$ has priority over student $j$ at $s$, then student $j$ cannot
have priority over student $i$ at $s$. Moreover, in the remainder of the
paper we omit the qualifier \textquotedblleft strict\textquotedblright\ and
write, for instance, \textquotedblleft partial order\textquotedblright\ to
mean \textquotedblleft strict partial order.\textquotedblright \footnote{%
In the standard economics literature, preference or priority relations are
usually taken to be reflexive. The corresponding reflexive relations can be
obtained from the above irreflexive relations in a straightforward way.
Given a strict weak order $\succ _{s}$, define a reflexive relation $%
\succsim _{s}$ by 
\begin{equation*}
\left( i,j\right) \in \succsim _{s}\Leftrightarrow \left( j,i\right) \notin
\succ _{s}
\end{equation*}%
Then, $\succsim _{s}$ is a weak order; that is, a complete and transitive
binary relation. Similarly, if $\succ _{s}$ is a strict linear order, then
the relation $\succsim _{s}$ defined in the same way is a linear order; that
is, a complete, transitive, and antisymmetric binary relation. Conversely,
given a weak order $\succsim _{s}$, its asymmetric part $\succ _{s}$,
defined by 
\begin{equation*}
\left( i,j\right) \in \succ _{s}\Leftrightarrow \left( i,j\right) \in
\succsim _{s}\text{ and }\left( j,i\right) \notin \succsim _{s}
\end{equation*}%
is a strict weak order. Likewise, given a linear order $R$, its asymmetric
part $P$ is a strict linear order.}

Let $\mathcal{A}\subseteq \mathcal{B}$ be the set of all asymmetric and
acyclic relations on $I$. Let $\mathcal{P}\subseteq \mathcal{B}$ be the set
of all strict partial orders on $I$. Let $\mathcal{W}\subseteq \mathcal{B}$
be the set of all strict weak orders on $I$. Finally, let $\mathcal{L}%
\subseteq \mathcal{B}$ be the set of all strict linear priority orders on $I$%
. This gives the following standard hierarchy: 
\begin{equation*}
\mathcal{L\subseteq W\subseteq P\subseteq A\subseteq B}.
\end{equation*}%
See, for example, Duggan (1999) on this.

As noted in Section 1, almost all previous studies on school choice
represent schools' priority relations as weak orders. In Section 2.2, we
discuss why it is meaningful to consider priority relations that are not
weak orders.

\subsection{Beyond weak priority orders and partial orders}

As stated earlier, almost all previous studies on school choice assume that
the priority relation of every school is a weak order. On the other hand,
Che et al. (2019a, b), Dur et al. (2019), Kitahara and Okumura (2020, 2021,
2024), and Kuvalekar (2023), discuss matching models where the preference or
priority relations of some agents are not weak orders.

\subsubsection{Ignoring Small Differences}

One reason to move beyond weak priority orders is that, in practice, schools
or matching authorities may wish to ignore small differences in evaluation
scores. The following example is due to Kitahara and Okumura (2021). Suppose
that a school's priority is based on a test scored out of 100 points. If the
score difference between two students is at least 10 points, then the
student with the higher score is given priority. If the difference is less
than 10 points, then the rule treats the two students as incomparable and
does not assign priority between them.

Such a priority relation is generally not a weak order. To see this,
consider three students $i$, $j$, and $k$ with scores 100, 92, and 84, in
the entrance exams at school $s$, respectively. Let $\succ $ be induced by
the rule above. Then $\left( i,j\right) \notin \succ _{s}$ and $\left(
j,k\right) \notin \succ _{s}$, because both score differences are less than
10. However, $\left( i,k\right) \in \succ _{s}$, since the score difference
between them is at least 10. Hence, negative transitivity fails. Therefore,
the priority relation is not a weak order.

However, this priority relation based on the rule is transitive. Indeed, if
student $i$ has priority over student $j$, and $j$ has priority over $k$,
then the score difference between $i$ and $k$ must be at least 20, so $i$
also has priority over $k$. Therefore, the relation can naturally be
represented as a strict partial order.

\subsubsection{Multiple Priorities}

There may be several distinct priority rules, and these rules need not
always be compatible with one another. In the old school choice system in
the city of Boston, for example, sibling priority and walk-zone priority
coexisted (Abdulkadiro\u{g}lu et al., 2005, 2006). Conflicts may arise
between the priority ordering based on sibling status and that based on
walk-zone status, particularly when one student has only sibling priority
while another has only walk-zone priority. Such conflicts may prevent the
existence of a nonwasteful matching that simultaneously respects both
priority orderings. Therefore, in such school choice systems, assignments
are typically determined by first specifying which priority rule is more
important and then using a single linear priority order for each school.

Here, we consider an alternative approach under which each priority rule is
respected as long as it is compatible with the others, whereas, whenever two
priority rules conflict, neither of them is applied. As a result, the
resulting priority relation need not be a weak order. For example, suppose
there are three students $i$, $j$, and $k$ and there are two priority orders
at school $s$, where 
\begin{equation*}
\succ _{s}^{1}:\text{ }i,j,k,\text{ }\succ _{s}^{2}:\text{ }j,k,i.
\end{equation*}%
Then, applying the rule above, 
\begin{equation*}
\succ _{s}=\left\{ \left( j,k\right) \right\} ,
\end{equation*}%
which does not satisfy negative transitivity. This is because, $\left(
j,i\right) \notin \succ _{s}$ and $\left( i,k\right) \notin \succ _{s},$ but 
$\left( j,k\right) \in \succ _{s}$.

By contrast, when each underlying priority relation is a weak order, the
combined priority relation is a strict partial order. A formal proof of this
result is provided in Section 2 of Kitahara and Okumura (2023), an earlier
manuscript from which the present paper evolved.

\subsubsection{Consenting}

Following Kesten (2010) and Dur et al. (2019), we consider a setting in
which each student may selectively consent to priority violations; that is,
each student can choose at which school and in favor of whom to consent to a
priority violation.

Suppose that $\succ _{s}$ is a linear order for all $s\in S$. For each $s\in
S,$ let $C:S\twoheadrightarrow I\times I$ be the correspondence, defined in
Dur et al. (2019) where $(i,j)\in C\left( s\right) $ means that student $i$
agrees not to object to $j$'s assignment to $s$, even if $i$ has higher
priority than $j$ at $s$.

Dur et al. (2019) introduce a weakened notion of stability for $\left( \succ
,C\right) =\left( \left( \succ _{s}\right) _{s\in S},\left( C_{s}\right)
_{s\in S}\right) $, called partial stability: a matching $\mu $ is \textbf{%
partially stable} if for every $(i,j,s)$ satisfying $j\in \mu \left(
s\right) ,$ $sP_{i}\mu \left( i\right) \ $and $\left( i,j\right) \in \succ
_{s},$ it holds $(i,j)\in C\left( s\right) $.

Dur et al. (2019) impose the following monotonicity condition: if $(i,j)\in
C\left( s\right) $, then for every $k$ such that $i\succ _{s}k\succ _{s}j$,
we also have $(i,k)\in C\left( s\right) $. For example, suppose that
priorities at a school are determined by test scores. Then, if student $i$
consents to a priority violation in favor of student $j$ at school $s$,
student $i$ must also consent to a priority violation in favor of any
student $j^{\prime }$ whose score is higher than $j$'s but lower than $i$'s.

Kitahara and Okumura (2021) define that for a given $\left( \succ
_{s},C\left( s\right) \right) $, let $\succ _{s}^{C}$ be a binary relation
on $I$ satisfying $i\succ _{s}^{C}j$ if and only if $i\succ _{s}j$ and $%
(i,j)\notin C\left( s\right) $. Then, they show that for any $\left( \succ
,C\right) $, a matching is partially stable for $\left( \succ ,C\right) $ if
and only if it is stable for $\succ ^{C}=\left( \succ _{s}^{C}\right) _{s\in
S}$.

Moreover, Kitahara and Okumura (2021) show that $\succ _{s}^{C}$ is a
partial order if a condition which is weaker than the monotonicity condition
introduced by Dur et al. (2019) is satisfied. However, if the condition is
not satisfied, then $\succ _{s}^{C}$ may not be transitive. For example, we
consider the following case.

Suppose $i\succ _{s}k\succ _{s}j$, $\left( i,j\right) \in C\left( s\right) ,$
$\left( i,k\right) \notin C\left( s\right) $ and $\left( k,j\right) \notin
C\left( s\right) $. Then, $\left( i,k\right) \in \succ _{s}^{C}$ and $\left(
k,j\right) \in \succ _{s}^{C}$ but $\left( i,j\right) \notin \succ _{s}^{C}$%
. Therefore, $\succ _{s}^{C}$ is not transitive. For example, $i$ and $k$
are majority students and $j$ is a minority student. Student $i$ consents to
a priority violation in favor of student $j$ but does not consent to one in
favor of student $k,$ because $k$ is a majority student and $j$ is a
minority student. On the other hand, student $k$ does not consent to any
priority violation.

In the general setting, although $\succ _{s}^{C}$ satisfies acyclicity, it
may not be a partial order.

\subsection{Linear order extension and existence result}

For $\succ _{s}\in \mathcal{B},$ let $\succ _{s}^{\ast }$ be a \textbf{%
linear order} \textbf{extension} of $\succ _{s}$ if $\succ _{s}^{\ast }\in 
\mathcal{L}$ and $\succ _{s}\subseteq \succ _{s}^{\ast }$. Note that there
may exist multiple linear order extensions for $\succ _{s}$. Moreover, we
let $E\left( \succ _{s}\right) $ be the set of all linear order extensions
of $\succ _{s}$. On the existence of linear order extensions of $\succ _{s}$%
, the following result is known.

\begin{remark}
For $\succ _{s}\in \mathcal{B},$ $E\left( \succ _{s}\right) \neq \emptyset $
if and only if $\succ _{s}\in \mathcal{A}$.
\end{remark}

Since any partial order is acyclic and asymmetric, for any $\succ _{s}\in 
\mathcal{P},$ there is a linear order extension of $\succ _{s}$. Let $\succ
^{\ast }$ be a \textbf{linear order} \textbf{extension profile }of $\succ $
if $\succ _{s}^{\ast }\in E\left( \succ _{s}\right) $ for all $s\in S$.
Moreover, we let $\mathcal{E}\left( \succ \right) $ be the set of all
possible linear order extension profiles of $\succ $.

The following result is known (see, e.g., Kitahara and Okumura (2024)).

\begin{remark}
If $\succ ^{\prime }\in \mathcal{E}\left( \succ \right) $ and $\mu $ is
stable for $\succ ^{\prime }$, then $\mu $ is also stable for $\succ $.
\end{remark}

By this result, if there is a linear order extension profile of $\succ $,
then a stable matching for the linear order extension profile is also stable
for $\succ $. The following result is also due to Kitahara and Okumura
(2024, Lemma 1).

\begin{remark}
If $\succ _{s}\in \mathcal{A}$ for all $s\in S$, then there is a stable
matching for $\succ $.
\end{remark}

By Remark 1, if $\succ _{s}\in \mathcal{A}$, then $E\left( \succ _{s}\right)
\neq \emptyset $. By Remark 2, a stable matching for the extension profile
is also stable for $\succ $. Since Gale and Shapley (1962) show that there
is a stable matching for a profile of linear priority orders, we have Remark
3.

\section{EADA mechanism}

We introduce the (simplified) \textbf{EADA} mechanism for $\succ $, which is
introduced by Kesten (2010) and modified by Tang and Yu (2014).\footnote{%
Regarding the modification by Tang and Yu (2014), see footnote 7 of Kitahara
and Okumura (2024).} The following mechanism is almost the same as that
introduced by Tang and Yu (2014, Subsection 4.2), but they only consider
weak priority orders. That is, we show that the mechanism also attains a
constrained efficient matching for $\succ $ even when $\succ _{s}$ is not a
weak order but a partial order for some $s\in S$.

Let 
\begin{equation*}
\hat{G}=\left( \hat{I},\hat{S},\hat{P},\hat{\succ},\hat{q}\right)
\end{equation*}%
be a reduced school choice problem in which $\hat{I}\subseteq I,$ $\hat{S}%
\subseteq S,$ and $\hat{\succ}_{s}$ is a linear order for all $s\in S$.

Let $DA\left( \hat{G}\right) $ be the result of the Gale and Shapley's
(1962) SPDA algorithm for $\hat{G}$. A school $s\in \hat{S}$ is said to be 
\textbf{underdemanded} at a matching $\mu $ with $\hat{G}$ if $\mu \left(
i\right) \hat{R}_{i}s$ for all $i\in \hat{I}$. Then, we have the following
result.

\begin{remark}
A school $s$ is underdemanded at $DA\left( \hat{G}\right) $ with $\hat{G}$\
if and only if $s$ never rejects at any step of the SPDA algorithm for $\hat{%
G}$.
\end{remark}

Next, we introduce the EADA mechanisms for $G$.\footnote{%
Although the original EADA mechanism is introduced for the case in which $%
\succ _{s}=\emptyset $ for every $s$, and mechanisms outside this case have
sometimes been called \textquotedblleft variants of EADA
mechanism\textquotedblright , this case is included in our framework.
Indeed, the empty relation is a weak order, so $\succ _{s}=\emptyset $ for
every $s$ implies $\succ \in \mathcal{W}^{\left\vert S\right\vert }$. Hence,
the original EADA mechanism is a special case of the mechanism for $\succ
\in \mathcal{W}^{\left\vert S\right\vert }$.}

\begin{description}
\item[Round $0$] Choose $\succ ^{\ast }\in \mathcal{E}\left( \succ \right) $.

\item[Round $1$] Let $I^{1}=I,$ $S^{1}=S,$ and $P^{1}=P$. Run the SPDA for $%
G^{1}=\left( I^{1},S^{1},P^{1},\succ ^{\ast },q\right) $. Let $U^{1}$ be the
set of underdemanded schools at $DA\left( G^{1}\right) $; that is, each $%
s\in U^{1}$ never rejects any student throughout the SPDA in this Round.
Moreover, let 
\begin{equation*}
E^{1}=\bigcup\nolimits_{s\in U^{1}\cup \left\{ \emptyset \right\} }DA\left(
G^{1}\right) \left( s\right) ,
\end{equation*}%
which is the set of students who are matched to an underdemanded school (or
unmatched) at $DA\left( G^{1}\right) $. Let $\hat{\mu}^{1}=DA\left(
G^{1}\right) $.

\item[Round $k$] Let $I^{k}=I^{k-1}\setminus E^{k-1}$ and $%
S^{k}=S^{k-1}\setminus U^{k-1}$. For all $i\in I^{k},$ let 
\begin{equation*}
Z_{i}^{k}=\left\{ s\in S^{k}\text{ }\left\vert \text{ }sP_{i}^{k-1}\emptyset
,\text{ }\left( j,i\right) \in \succ _{s}\text{ and }sP_{j}DA\left(
G^{k-1}\right) (j),\text{ for some }j\in E^{k-1}\right. \right\} \text{,}
\end{equation*}%
and let $P_{i}^{k}$ be a linear order such that for all $s\in Z_{i}^{k}$, $%
\emptyset P_{i}^{k}s$ and for all $s^{\prime },s^{\prime \prime }\in \left(
S\cup \left\{ \emptyset \right\} \right) \setminus Z_{i}^{k},$ $s^{\prime
}P_{i}^{k-1}s^{\prime \prime }$ implies $s^{\prime }P_{i}^{k}s^{\prime
\prime }$. Run the SPDA for $G^{k}=\left( I^{k},S^{k},P^{k},\succ ^{\ast
},q\right) $, where $\succ ^{\ast }$ is restricted to $I^{k}$. Let $U^{k}$
be the set of underdemanded schools at $DA\left( G^{k}\right) $; that is,
each $s\in U^{k}$ never rejects any student throughout the SPDA of this
Round. Moreover, let 
\begin{equation*}
E^{k}=\bigcup\nolimits_{s\in U^{k}\cup \left\{ \emptyset \right\} }DA\left(
G^{k}\right) \left( s\right) ,
\end{equation*}%
which is the set of students who are matched to an underdemanded school (or
unmatched) at $DA\left( G^{k}\right) $.\footnote{%
In each Round $k,$ $E^{k}$ is not empty. First, if no student is rejected,
then all schools are underdemanded and $E^{k}=I^{k}$. Otherwise, then we can
let $i$ be the student who is rejected last by a school. Then, $i$ must be
in $E^{k}$.} Let $\hat{\mu}^{k}$ be such that 
\begin{eqnarray*}
\hat{\mu}^{k}\left( i\right) &=&DA\left( G^{k}\right) \left( i\right) \text{
for all }i\in I^{k}, \\
\hat{\mu}^{k}\left( i\right) &=&DA\left( G^{\kappa }\right) \left( i\right) 
\text{ for all }i\in E^{\kappa }\text{ and all }\kappa =1,\cdots ,k-1.
\end{eqnarray*}
\end{description}

The procedure terminates after Round $K$ when $I^{K}=E^{K}$, so that all
remaining students are eliminated in that round and 
\begin{equation*}
EA\left( \succ ,\succ ^{\ast }\right) =\hat{\mu}^{K}.
\end{equation*}

In this mechanism, after running DA in each round, the assignments of
students who are matched to underdemanded schools, as well as those of
students who remain unmatched, are fixed. Consider Round $k$. To ensure that
no student whose assignment has already been fixed in an earlier round has
justified envy toward a student $i$ who remains active, the mechanism
modifies $i$'s preference so that $i$ no longer applies to any school in $%
Z_{i}^{k}$. Specifically, every school in $Z_{i}^{k}$ is made unacceptable
to $i$ by moving it below the outside option in $i$'s preference ordering $%
P_{i}^{k}$.

We have the following result.

\begin{theorem}
If $\succ \in \mathcal{P}^{\left\vert S\right\vert }$ and $\succ ^{\ast }\in 
\mathcal{E}\left( \succ \right) $, then $EA\left( \succ ,\succ ^{\ast
}\right) \in \mathcal{M}^{\ast }\left( \succ \right) $.
\end{theorem}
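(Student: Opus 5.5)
We prove two things about $\mu=EA(\succ,\succ^{\ast})$: first that $\mu$ is stable for $\succ$, and second that no matching stable for $\succ$ Pareto dominates it. Throughout we use the round structure: students in $E^{k}$ are fixed in Round $k$ at underdemanded schools (or $\emptyset$). Each active student's assignment in Round $k+1$ is weakly better than in Round $k$, because removing the students in $E^{k}$ together with the underdemanded schools does not create new competition. The truncations $Z^{k+1}$ only delete schools that the student was not holding, by the standard Tang--Yu argument.

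\textbf{Stability.} Individual rationality is immediate, since every $P_{i}^{k}$ keeps $\emptyset$ above the truncated schools and above the originally unacceptable ones. For non-wastefulness and fairness, fix $i\in E^{k}$ with $sP_{i}\mu(i)$. Since $i$ was fixed at an underdemanded school in $G^{k}$ with its current preferences, $s$ is either (a) in $S^{k}$ and rejected $i$ in Round $k$, (b) removed earlier as underdemanded, or (c) truncated for $i$.

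In case (a), $s$ is full at Round $k$, and every later occupant of $s$ either was present then or is a student who displaced someone. We need that $\succ_{s}$ is not violated between $i$ and each final $j\in\mu(s)$ with $j$ still active after Round $k$. This is exactly what the truncation handles. If $(i,j)\in\succ_{s}$, then $s\in Z_{j}^{k+1}$, so $j$ never applies to $s$ afterwards. Hence $j$ can end at $s$ only if $j$ already held $s$ at Round $k$ and never left, which cannot happen, because $s$ preferred $j$ to $i$ under $\succ^{\ast}_{s}\supseteq\succ_{s}$.

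Here \textbf{transitivity is needed}. The truncation is defined only by direct comparisons $(j',i)\in\succ_{s}$ with $j'\in E^{k-1}$, but fairness must hold against every eliminated student from every earlier round. Suppose $s$ is truncated for $j$ because of some $j'$ with $j'\succ_{s}j$. Then, in later rounds, students $l$ with $j\succ_{s}l$ must also be blocked. Since $j'\succ_{s}l$ by transitivity, $s\in Z_{l}$ as well. Without transitivity this chain breaks, which is the over-improvement phenomenon. Cases (b) and (c) are handled similarly: an underdemanded school keeps all its applicants, and nobody with lower $\succ_{s}$-priority can later enter it ahead of $i$, by the same truncation argument. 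Non-wastefulness follows because a school left with an empty seat is underdemanded and retains every applicant.

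\textbf{Constrained efficiency.} Suppose $\nu$ is stable for $\succ$ and weakly Pareto dominates $\mu$. We show by induction on $k$ that $\nu(i)=\mu(i)$ for all $i\in E^{k}$. The standard fact used is that at an underdemanded school in a DA outcome, any student who weakly prefers their current assignment cannot be improved without someone in an earlier-fixed set getting worse. Concretely, let $D$ be the set of students strictly improving under $\nu$. Take the student in $D\cap E^{k}$, $k$ minimal, who is rejected last in the Round-$k$ DA among the relevant ones. Following Tang--Yu, each improving student $i$ must take a seat at $\nu(i)$, which rejected $i$ in Round $k$ (or was truncated). The seat is freed by another improving student, so $D$ is closed under this "whoever sits there" map. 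The last rejection in Round $k$ then yields a student $j\notin D$ who is worse off, a contradiction. The truncated case is different: there $\nu(i)=s\in Z_{i}$ means some fixed $j'$ with $j'\succ_{s}i$ desires $s$ over $\mu(j')=\nu(j')$, violating stability of $\nu$ directly.

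\textbf{Main obstacle.} I expect the hardest step to be the fairness argument in the stability part, verifying that the one-step truncation set $Z$ suffices under a partial order. This means tracking, across several rounds, chains $j'\succ_{s}j\succ_{s}l$ in which the intermediate student $j$ may be fixed at a different round. Transitivity must be invoked exactly there.
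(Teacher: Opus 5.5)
\textbf{Overall.} Your decomposition matches the paper's: stability of $EA(\succ,\succ^{\ast})$, where transitivity enters exactly as in your case (c) (case 2 of the paper's Lemma 4), plus a non-domination argument that uses only the truncation sets and fairness of $\nu$ (the paper's Theorem 2, valid for acyclic priorities). There is one genuine gap, in non-wastefulness, and a fixable error in the efficiency part.

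\textbf{Non-wastefulness.} That a school $s$ with an empty final seat was underdemanded in its elimination round only excludes envy from students who were active in that round and for whom $s$ was not yet truncated. It does not cover a student eliminated earlier who was rejected by $s$ in her own round. Nor does it cover a student for whom $s$ was truncated. For these you need $s$ to be full at the end, i.e., that a school full in some round stays full afterwards. Nothing in your sketch gives this: occupants of $s$ in Round $k$ may move up later, and you never show their seats are refilled. The paper supplies this in Lemmas 1--3. Each $\hat{\mu}^{k}$ weakly Pareto dominates $\hat{\mu}^{k-1}$. By Erdil's reshuffling lemma, a weak Pareto improvement of an individually rational, nonwasteful matching is nonwasteful and has the same number of students at every school. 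Hence nonwastefulness is inherited from the DA outcome $\hat{\mu}^{1}$. Alternatively, the restriction of $\hat{\mu}^{k}$ to $I^{k+1}$ is stable for $G^{k+1}$, and the rural hospitals theorem applies.

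Two smaller repairs are needed in the fairness part. First, reduce case (b) to (c): if $s$ was not truncated for $i$ when it was underdemanded, then $\hat{\mu}^{\kappa(s)}(i)R_{i}s$, and monotonicity contradicts $sP_{i}\mu(i)$. Second, run case (c) as an induction on the envier's elimination round. The student who caused the truncation may herself desire $s$ only through an earlier truncation.

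\textbf{Efficiency part.} Your set $D$ of improvers is measured against the final $\mu$, and it is not closed under seat-chasing in Round $k$. A student still active after Round $k$ may have $\nu(j)=\mu(j)P_{j}\hat{\mu}^{k}(j)$, so $j\notin D$ even though $j$ vacates a Round-$k$ seat. Work instead with $D^{k}=\{j:\nu(j)P_{j}\hat{\mu}^{k}(j)\}$. Monotonicity makes $\nu$ weakly dominate $\hat{\mu}^{k}$. Minimality of $k$ keeps earlier-eliminated students out of $D^{k}$, and $D\cap E^{k}\subseteq D^{k}$.

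Now take $j\in D^{k}$. Your truncation argument rules out truncated targets $\nu(j)$. A target eliminated before Round $k$ reduces to truncation as in (b). Every other $j\in D^{k}$ was rejected by $\nu(j)$ in Round $k$. Counting seats school by school, no school gains more $D^{k}$-students than it loses. Summing forces equality and rules out unmatched $D^{k}$-students. So no $D^{k}$-student sits at a school that never rejected, which contradicts $D\cap E^{k}\neq\emptyset$. No ``last rejection'' step is needed.

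\textbf{Comparison with the paper.} This within-round Tang--Yu argument is a legitimate alternative to the paper's, which counts globally instead. There, the first improver moves from outside $S^{k+1}$ into $S^{k+1}$. Nobody moves the other way, again by truncation and fairness of $\nu$. So some school in $S^{k+1}$ gains a student, contradicting the reshuffling lemma.
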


The proof is provided in the Appendix.

By Theorem 1, for any $\succ \in \mathcal{P}^{\left\vert S\right\vert }$, $%
EA\left( \succ ,\succ ^{\ast }\right) $ is constrained efficient for $\succ $%
.

\begin{remark}
(Kitahara and Okumura, 2024) Fix any $\succ \in \mathcal{P}^{\left\vert
S\right\vert }$. If $\mu \in \mathcal{M}^{\ast }\left( \succ \right) $, then
there is $\succ ^{\ast }\in \mathcal{E}\left( \succ \right) $ such that $\mu
=EA\left( \succ ,\succ ^{\ast }\right) $.
\end{remark}

\section{SIC mechanism}

Next, we consider another mechanism. Let 
\begin{equation*}
D_{i}\left( \mu \right) =\left\{ \left. j\in I\text{ }\right\vert \text{ }%
\mu \left( i\right) P_{j}\mu \left( j\right) \right\}
\end{equation*}%
be the set of students who prefer $\mu \left( i\right) $ to their matched
school. Next, let 
\begin{equation*}
Y_{i}\left( \mu ,\succ \right) =\left\{ \left. i^{\prime }\in D_{i}\left(
\mu \right) \text{ }\right\vert \text{ }\left( j,i^{\prime }\right) \in
\succ _{\mu \left( i\right) }\text{ for some }j\in D_{i}\left( \mu \right)
\right\}
\end{equation*}%
be the set of students in $D_{i}\left( \mu \right) $ whose priority is lower
than some other student in $D_{i}\left( \mu \right) $.

Finally, let $X_{i}\left( \mu ,\succ \right) =D_{i}\left( \mu \right)
\setminus Y_{i}\left( \mu ,\succ \right) $; that is, the set of students
who\ prefer $\mu \left( i\right) $ to their assignment and whose priorities
for $\mu \left( i\right) $ are not lower than any student in $D_{i}\left(
\mu \right) $. Of course, $X_{i}\left( \mu \right) $ may not be singleton,
because $\succ $ is allowed not to be total.

Let $\phi =\left\{ i_{1}i_{2},i_{2}i_{3},\cdots ,i_{q}i_{q+1}\right\} $ be a
set of ordered pairs of $I$. The set $\phi =\left\{
i_{1}i_{2},i_{2}i_{3},\cdots ,i_{q}i_{q+1}\right\} $ ($q\geq 2$) is called a 
\textit{cycle} if $i_{1},\cdots ,i_{q}$ ($q\geq 2$) are distinct and $%
i_{q+1}=i_{1}$.

\begin{definition}
A stable improvement cycle of $\mu $ for $\succ $ denoted by $\phi =\left\{
i_{1}i_{2},\cdots ,i_{q}i_{q+1}\right\} $ is a cycle such that: (i) $\mu
\left( i_{l}\right) \in S$, and (ii) $i_{l}\in X_{i_{l+1}}\left( \mu ,\succ
\right) $ for any $l=1,\cdots ,q$.
\end{definition}

Given a matching $\mu $ and a stable improvement cycle of $\mu $ denoted by $%
\phi =\left\{ i_{1}i_{2},\cdots ,i_{q}i_{q+1}\right\} $, we define a new
matching $\phi \circ \mu $ by 
\begin{equation*}
\left( \phi \circ \mu \right) \left( j\right) = 
\begin{cases}
\mu \left(j\right) \text{ if }j\notin \left\{ i_{1},\cdots ,i_{q}\right\} \\ 
\mu \left(i_{l+1}\right) \text{ if }j=i_{l}%
\end{cases}%
\end{equation*}

Now, we introduce the SIC (stable improvement cycle) mechanism.

\begin{description}
\item[Round $0$] Choose $\succ ^{\ast }\in \mathcal{E}\left( \succ \right) $.

\item[Round $1$] Run the SPDA for $G^{\ast }=\left( I,S,P,\succ ^{\ast
},q\right) $ and let $\mu ^{1}=DA\left( G^{\ast }\right) $.

\item[Round $t\geq 2$] Let $E^{t}$ be a set of ordered pairs of $I$ such
that $ij\in E^{t}$ if and only if $i\in X_{j}\left( \mu ^{t-1},\succ \right) 
$. If there is a cycle in $E^{t}$, then choose one denoted by $\phi ^{t}$
and let $\mu ^{t}=\left( \phi ^{t}\circ \mu ^{t-1}\right) $. Otherwise, the
mechanism is completed and $\mu ^{t-1}$ is the resulting matching.
\end{description}

For a given sequence of cycle selections, let $\mu ^{T}$ denote the terminal
matching.

The result of this mechanism depends on (1) which extension profile is
chosen in Round $0$ and (2) which improvement cycle is selected in Rounds $%
2, $ $3,$ and all subsequent Rounds. Thus, we let $\mathcal{SI}\left( \succ
,\succ ^{\ast }\right) $ denote the \textit{set} of matchings that can be
reached by the SIC procedure from the DA outcome under $\succ ^{\ast },$
over all possible sequences of stable-improvement-cycle selections.

The following result is shown by Kitahara and Okumura (2021).

\begin{remark}
(Kitahara and Okumura, 2021) If $\succ \in \mathcal{P}^{\left\vert
S\right\vert }$, then any $\mu \in \mathcal{SI}\left( \succ ,\succ ^{\ast
}\right) $ is a constrained efficient matching for $\succ $.
\end{remark}

More precisely, Kitahara and Okumura (2021) take a stable matching as an
input and discuss how to Pareto improve it while preserving stability. In
our setting, the input to the mechanism is not a stable matching but rather
a linear order extension of $\succ $. However, by using this linear
extension to compute a stable matching in Step 1, our mechanism aligns with
theirs.

Moreover, due to Kitahara and Okumura (2021), we immediately have the
following result.

\begin{remark}
(Kitahara and Okumura, 2021) If $\mu $ is a constrained efficient matching
for $\succ \in \mathcal{P}^{\left\vert S\right\vert }$, then there is $\succ
^{\ast }\in \mathcal{E}\left( \succ \right) $ such that $\mathcal{SI}\left(
\succ ,\succ ^{\ast }\right) $ includes $\mu $.
\end{remark}

Kitahara and Okumura (2024) point out that the two mechanisms are outcome
equivalent. Formally, we have the following result.

\begin{remark}
(Kitahara and Okumura, 2024) For $\succ \in \mathcal{P}^{\left\vert
S\right\vert }$, 
\begin{equation}
\left\{ EA\left( \succ ,\succ ^{\ast }\right) \right\} _{\succ ^{\ast }\in 
\mathcal{E}\left( \succ \right) }=\bigcup\limits_{\succ ^{\ast }\in 
\mathcal{E}\left( \succ \right) }\mathcal{SI}\left( \succ ,\succ ^{\ast
}\right) =\mathcal{M}^{\ast }\left( \succ \right) .  \notag
\end{equation}
\end{remark}

Kitahara and Okumura (2021) show that, for $\succ \in \mathcal{P}%
^{\left\vert S\right\vert }$, 
\begin{equation*}
\bigcup\limits_{\succ ^{\ast }\in \mathcal{E}\left( \succ \right) }\mathcal{%
SI}\left( \succ ,\succ ^{\ast }\right) =\mathcal{M}^{\ast }\left( \succ
\right) .
\end{equation*}%
Kitahara and Okumura (2024) show that, for $\succ \in \mathcal{P}%
^{\left\vert S\right\vert }$, 
\begin{equation}
\left\{ EA\left( \succ ,\succ ^{\ast }\right) \right\} _{\succ ^{\ast }\in 
\mathcal{E}\left( \succ \right) }=\mathcal{M}^{\ast }\left( \succ \right) , 
\notag
\end{equation}%
and therefore, they point out the equivalence result.

Note that it need not be the case that 
\begin{equation*}
\mathcal{SI}\left( \succ ,\succ ^{\ast }\right) =\left\{ EA\left( \succ
,\succ ^{\ast }\right) \right\} .
\end{equation*}%
This is because the outcome of the SIC mechanism also depends on which
stable improvement cycle is selected at each step, and hence $\mathcal{SI}%
\left( \succ ,\succ ^{\ast }\right) $ may contain multiple matchings.
Nevertheless, both mechanisms can attain every constrained efficient
matching under $\succ $.

\section{Acyclic priorities}

Thus far, we show that the two mechanisms have the same range, as long as $%
\succ \in \mathcal{P}^{\left\vert S\right\vert }$. By Remark 1, both of them
are well-defined if $\succ \in \mathcal{A}^{\left\vert S\right\vert }$.
However, on this broader domain, the two mechanisms need not be outcome
equivalent. Moreover, each mechanism may fail to attain a constrained
efficient matching, albeit in a different way.

First, the following example shows that the EADA mechanism may not attain
any constrained efficient matching.

\subsubsection*{\textbf{Example 1}}

Let $I=\left\{ i_{1},\cdots ,i_{5}\right\} $ and $S=\left\{ s_{1},\cdots
,s_{4}\right\} $, $q_{s}=1$ for all $s\in S$. Let the students' preferences $%
P$ and the schools' priority orders $\succ ^{\ast }$ be as follows:

\begin{center}
$%
\begin{tabular}{lllllllll}
$P_{i_{1}}$ & $P_{i_{2}}$ & $P_{i_{3}}$ & $P_{i_{4}}$ & $P_{i_{5}}$ & $\succ
_{s_{1}}^{\ast }$ & $\succ _{s_{2}}^{\ast }$ & $\succ _{s_{3}}^{\ast }$ & $%
\succ _{s_{4}}^{\ast }$ \\ \hline
$s_{2}$ & $s_{1}$ & $s_{1}$ & $s_{3}$ & $s_{1}$ & $i_{4}$ & $i_{2}$ & $i_{3}$
& $i_{4}$ \\ 
$s_{1}$ & $s_{2}$ & $s_{4}$ & $s_{4}$ & $s_{4}$ & $i_{1}$ & $i_{1}$ & $i_{4}$
& $i_{5}$ \\ 
$\emptyset $ & $\emptyset $ & $s_{3}$ & $\emptyset $ & $\emptyset $ & $i_{5}$
& $i_{3}$ & $i_{2}$ & $i_{3}$ \\ 
&  & $\emptyset $ &  &  & $i_{3}$ & $i_{4}$ & $i_{1}$ & $i_{1}$ \\ 
&  &  &  &  & $i_{2}$ & $i_{5}$ & $i_{5}$ & $i_{2}$%
\end{tabular}%
$
\end{center}

Let $\succ $ be such that%
\begin{eqnarray*}
&\succ &_{s_{1}}=\succ _{s_{1}}^{\ast }\setminus \left\{ \left(
i_{5},i_{2}\right) \right\} , \\
&\succ &_{s_{i}}=\succ _{s_{i}}^{\ast }\text{ for all }i=2,3, \\
&\succ &_{s_{4}}=\succ _{s_{4}}^{\ast }\setminus \left\{ \left(
i_{5},i_{3}\right) \right\} .
\end{eqnarray*}%
Since $\left( i_{5},i_{3}\right) ,\left( i_{3},i_{2}\right) \in \succ
_{s_{1}}$ but $\left( i_{5},i_{2}\right) \notin \succ _{s_{1}}$, $\succ
_{s_{1}}$ is not transitive and thus not a partial order. However, $\succ
_{s_{1}}$ is acyclic, because $\left( i_{2},i_{5}\right) \notin \succ
_{s_{1}}$. Note that $\succ ^{\ast }\in \mathcal{E}\left( \succ \right) $.

In this example, there are only two stable matchings $\mu $ and $\mu
^{\prime }$ for $\succ $ such that 
\begin{eqnarray*}
\left( \mu \left( i_{1}\right) ,\cdots ,\mu \left( i_{5}\right) \right)
&=&\left( s_{1},s_{2},s_{3},s_{4},\emptyset \right) , \\
\left( \mu ^{\prime }\left( i_{1}\right) ,\cdots ,\mu ^{\prime }\left(
i_{5}\right) \right) &=&\left( s_{1},s_{2},s_{4},s_{3},\emptyset \right) .
\end{eqnarray*}%
Since $\mu ^{\prime }$ Pareto dominates $\mu $, $\mu ^{\prime }$ is uniquely
constrained efficient for $\succ $.

We let $\succ ^{\ast \ast }\in \mathcal{E}\left( \succ \right) \setminus
\left\{ \succ ^{\ast }\right\} $ be such that $\succ _{s_{j}}^{\ast }=\succ
_{s_{j}}^{\ast \ast }=\succ _{s_{j}}$ for all $j=1,2,3$ and 
\begin{equation*}
\succ _{s_{4}}^{\ast \ast }:i_{4}\text{ }i_{3}\text{ }i_{5}\text{ }i_{1}%
\text{ }i_{2}.
\end{equation*}%
At $s_{1},s_{2}$ and $s_{3}$, the acyclic priority relation has a unique
linear-order extension. At $s_{4}$, the only undetermined comparison is
between $i_{3}$ and $i_{5}$. Hence, $\succ $ has exactly two linear-order
extension profiles $\succ ^{\ast }$ and $\succ ^{\ast \ast }$.

We derive $EA\left( \succ ,\succ ^{\ast }\right) $. In Round 1 
\begin{equation*}
DA\left( G^{1}\right) =\mu .
\end{equation*}%
Then, $E^{1}=\left\{ i_{5}\right\} $; that is, $i_{5}$ is eliminated,
because their assignment is $\emptyset $. Then, since $\left(
i_{5},i_{3}\right) \in \succ _{s_{1}}$, $Z_{i_{3}}^{2}=\left\{ s_{1}\right\} 
$. Hence $\emptyset P_{i_{3}}^{2}s_{1}$, to prevent $i_{3}$ being matched to 
$s_{1}$.

Second, 
\begin{equation*}
\left( DA\left( G^{2}\right) \left( i_{1}\right) ,\cdots ,DA\left(
G^{2}\right) \left( i_{4}\right) \right) =\left(
s_{2},s_{1},s_{4},s_{3}\right) ,
\end{equation*}%
and the EADA terminates. Therefore,%
\begin{equation*}
\left( EA\left( \succ ,\succ ^{\ast }\right) \left( i_{1}\right) ,\cdots
,EA\left( \succ ,\succ ^{\ast }\right) \left( i_{5}\right) \right) =\left(
s_{2},s_{1},s_{4},s_{3},\emptyset \right) \text{.}
\end{equation*}%
However, $EA\left( \succ ,\succ ^{\ast }\right) $ is not fair for $\succ ,$
because $\left( i_{3},i_{2}\right) \in \succ _{s_{1}}$and $%
s_{1}P_{i_{3}}s_{4}$.

The failure is driven by the non-transitivity of $\succ _{s_{1}}$: although $%
\left( i_{5},i_{3}\right) \in \succ _{s_{1}}$ and $\left( i_{3},i_{2}\right)
\in \succ _{s_{1}}$, $\left( i_{5},i_{2}\right) \notin \succ _{s_{1}}$.
Since $\left( i_{5},i_{3}\right) \in \succ _{s_{1}}$, the elimination of $%
i_{5}$ causes $s_{1}$ to be removed from $i_{3}$'s acceptable choice. This
allows $i_{2}$ to be assigned to $s_{1},$ even though $\left(
i_{3},i_{2}\right) \in \succ _{s_{1}}$ and $s_{1}P_{i_{3}}s_{4}$. If $\succ
_{s_{1}}$ were transitive while retaining $\left( i_{5},i_{3}\right) \in
\succ _{s_{1}}$ and $\left( i_{3},i_{2}\right) \in \succ _{s_{1}}$, $\left(
i_{5},i_{2}\right) \in \succ _{s_{1}}$ would also hold, and the elimination
of $i_{5}$ would prevent $i_{2}$, as well as $i_{3}$, from applying to $%
s_{1} $.\footnote{%
If, instead, $\left( i_{5},i_{3}\right) \notin \succ _{s_{1}}$, then $%
s_{1}\notin Z_{i_{3}}^{2}$, so $_{3}$ applies to $s_{1}$ in Round 2. In that
case, both both $i_{2}$ and $i_{3}$ apply to $s_{1}$, and $i_{2}$ is
rejected because $i_{3}\succ _{s_{1}}^{\ast }i_{2}$. On the other hand, if $%
\left( i_{3},i_{2}\right) \notin \succ _{s_{1}}$, then the resulting
assignment does not violate the priority of $i_{3}$ over $i_{2}$ at $s_{1}$.}

Next, we derive $EA\left( \succ ,\succ ^{\ast \ast }\right) $. In Round 1, 
\begin{equation*}
\left( DA\left( G^{1}\right) \left( i_{1}\right) ,\cdots ,DA\left(
G^{1}\right) \left( i_{5}\right) \right) =\left(
s_{1},s_{2},s_{4},s_{3},\emptyset \right) .
\end{equation*}%
Then, since $s_{3}$ never rejects, $U^{1}=\left\{ s_{3}\right\} $ and $%
E^{1}=\left\{ i_{4},i_{5}\right\} $. Moreover, as in the previous case,$\
Z_{i_{3}}^{2}=\left\{ s_{1}\right\} $.

Second, 
\begin{equation*}
\left( DA\left( G^{2}\right) \left( i_{1}\right) ,\cdots ,DA\left(
G^{2}\right) \left( i_{3}\right) \right) =\left( s_{2},s_{1},s_{4}\right) .
\end{equation*}%
and the EADA terminates. Therefore,%
\begin{equation*}
\left( EA\left( \succ ,\succ ^{\ast \ast }\right) \left( i_{1}\right)
,\cdots ,EA\left( \succ ,\succ ^{\ast \ast }\right) \left( i_{5}\right)
\right) =\left( s_{2},s_{1},s_{4},s_{3},\emptyset \right) \text{.}
\end{equation*}%
However, $EA\left( \succ ,\succ ^{\ast \ast }\right) \left( =EA\left( \succ
,\succ ^{\ast }\right) \right) $ is also not fair for $\succ ,$ because $%
\left( i_{3},i_{2}\right) \in \succ _{s_{1}}$and $s_{1}P_{i_{3}}s_{4}$.

Hence, in this example, under any linear order extension profiles of $\succ
, $ the EADA mechanism fails to be fair for $\succ $. On the other hand, we
next show that, in this example, the SIC mechanism always attains a
constrained efficient matching.

First, we consider $\succ ^{\ast }\in \mathcal{E}\left( \succ \right) $ such
that the SPDA result for $\succ ^{\ast }$ is $\mu ^{\prime }$, as defined
above. Then, there must be no SIC. This is because, since $\left(
i_{3},i_{2}\right) \in \succ _{s_{1}}$, $i_{2}\in Y_{i_{1}}\left( \mu
^{\prime }\right) $. Thus, in this case, the SIC mechanism results in $\mu
^{\prime },$ which is uniquely constrained efficient for $\succ $.

Second, we consider another $\succ ^{\ast \ast }\in \mathcal{E}\left( \succ
\right) $ such that the SPDA result for $\succ ^{\ast \ast }$ is $\mu $.
Then, there is a unique SIC $\phi =\left\{ i_{3}i_{4},i_{4}i_{3}\right\} ,$
because $\left( i_{5},i_{3}\right) \notin \succ _{s_{4}}$. Thus, in any
case, the SIC mechanism results in the unique constrained efficient matching 
$\mu ^{\prime };$ that is, $\mathcal{SI}\left( \succ ,\succ ^{\prime
}\right) =\left\{ \mu ^{\prime }\right\} $ for all $\succ ^{\prime }\in 
\mathcal{E}\left( \succ \right) $.\newline

On the other hand, in the next example, if $\succ _{s}\in \mathcal{%
A\setminus P}$ for some $s$, any matching included in $\mathcal{SI}\left(
\succ ,\succ ^{\ast }\right) $ fails to attain any constrained efficient
matching for $\succ $. The example is due to Dur et al. (2019).

\subsubsection*{\textbf{Example 2}}

Let $I=\left\{ i_{1},\cdots ,i_{4}\right\} $ and $S=\left\{ s_{1},\cdots
,s_{4}\right\} $, $q_{s}=1$ for all $s\in S$. Let the students' preferences $%
P$ and the schools' priority orders $\succ ^{\ast }$ be as follows:

\begin{center}
$%
\begin{tabular}{llllllll}
$P_{i_{1}}$ & $P_{i_{2}}$ & $P_{i_{3}}$ & $P_{i_{4}}$ & $\succ
_{s_{1}}^{\ast }$ & $\succ _{s_{2}}^{\ast }$ & $\succ _{s_{3}}^{\ast }$ & $%
\succ _{s_{4}}^{\ast }$ \\ \hline
$s_{2}$ & $s_{3}$ & $s_{4}$ & $s_{2}$ & $i_{1}$ & $i_{2}$ & $i_{3}$ & $i_{4}$
\\ 
$s_{1}$ & $s_{2}$ & $s_{2}$ & $s_{4}$ & $i_{2}$ & $i_{1}$ & $i_{2}$ & $i_{3}$
\\ 
$\emptyset $ & $\emptyset $ & $s_{3}$ & $\emptyset $ & $i_{3}$ & $i_{3}$ & $%
i_{1}$ & $i_{1}$ \\ 
&  & $\emptyset $ &  & $i_{4}$ & $i_{4}$ & $i_{4}$ & $i_{2}$%
\end{tabular}%
$
\end{center}

Moreover, let $\succ $ be such that%
\begin{gather*}
\succ _{s_{i}}=\succ _{s_{i}}^{\ast }\text{ for all }i=1,3,4 \\
\succ _{s_{2}}=\succ _{s_{2}}^{\ast }\setminus \left\{ \left(
i_{1},i_{4}\right) \right\} \text{.}
\end{gather*}%
Since $\left( i_{1},i_{3}\right) ,\left( i_{3},i_{4}\right) \in \succ
_{s_{2}}$ but $\left( i_{1},i_{4}\right) \notin \succ _{s_{2}}$, $\succ
_{s_{2}}$ is not transitive and thus not a partial order. However, $\succ
_{s_{2}}\in \mathcal{A}$, because $\left( i_{4},i_{1}\right) \notin \succ
_{s_{2}}$. Moreover, $\succ ^{\ast }$ is the unique linear extension profile
of $\succ ;$ that is, $\mathcal{E}\left( \succ \right) =\left\{ \succ ^{\ast
}\right\} $.

Then, there are two stable matchings for $\succ $ denoted by $\mu $ and $\mu
^{\prime }$ where 
\begin{eqnarray*}
\left( \mu \left( i_{1}\right) ,\cdots ,\mu \left( i_{4}\right) \right)
&=&\left( s_{1},s_{2},s_{3},s_{4}\right) , \\
\left( \mu ^{\prime }\left( i_{1}\right) ,\cdots ,\mu ^{\prime }\left(
i_{4}\right) \right) &=&\left( s_{1},s_{3},s_{4},s_{2}\right) .
\end{eqnarray*}%
Since $\mu ^{\prime }$ Pareto dominates $\mu ,$ $\mu ^{\prime }$ is uniquely
constrained efficient for $\succ $. Further, the SPDA result with $\succ
^{\ast }$ is $\mu $.

Then, there is no stable improvement cycle of $\mu ^{1}=\mu $. This is
because, first, since $\left( i_{3},i_{4}\right) \in \succ _{s_{2}}$, $%
i_{4}\in Y_{i_{2}}\left( \mu \right) ,$ and $\left( i_{1},i_{3}\right) \in
\succ _{s_{2}}$, $i_{3}\in Y_{i_{2}}\left( \mu \right) $. Therefore, $%
\mathcal{SI}\left( \succ ,\succ ^{\ast }\right) =\left\{ \mu \right\} $ for $%
\succ ^{\ast },$ which is the unique linear order extension of $\succ $, but
it is Pareto dominated by the other stable matching $\mu ^{\prime }$.

To understand why the SIC mechanism fails to achieve the Pareto improvement,
observe that that both $i_{3}$ and $i_{4}$ prefer $s_{2}$ to their
assignments under $\mu $. Since $\left( i_{3},i_{4}\right) \in \succ
_{s_{2}},$ $i_{4}\in Y_{i_{2}}\left( \mu \right) $. Consequently, the SIC
mechanism cannot execute the Pareto-improving cycle that assigns $i_{2}$ to $%
s_{3}$, $i_{3}$ to $s_{4}$, and $i_{4}$ to $s_{2}$. In this case, this
restriction is unnecessarily strong because $i_{3}$, who prevents $i_{4}$
from receiving $s_{2}$, would themselves move to the more-preferred school $%
s_{4}$ as part of the same reassignment and would therefore no longer prefer 
$s_{2}$.

The non-transitivity of $\succ _{s_{2}}$ is also essential for this failure.
Although $\left( i_{1},i_{3}\right) \in \succ _{s_{2}}$ and $\left(
i_{3},i_{4}\right) \in \succ _{s_{2}}$, we have $\left( i_{1},i_{4}\right)
\notin \succ _{s_{2}}$. Thus, even though $s_{2}P_{i_{1}}s_{1}$, $i_{1}$
does not have justified envy toward $i_{4}$, who is assigned to $s_{2}$
under $\mu ^{\prime }$. If $\succ _{s_{2}}$ were transitive while retaining
the first two priority comparisons,, $\left( i_{1},i_{4}\right) \in \succ
_{s_{2}}$ would hold, and $\mu ^{\prime }$ would not be stable.

On the other hand, we consider the EADA mechanism instead of the SIC
mechanism. We must use $\succ ^{\ast }$ as the unique extension of $\succ $.
In Round 1, 
\begin{equation*}
DA\left( G^{1}\right) \left( i_{l}\right) =s_{l},\text{ for }l=1,2,3,4.
\end{equation*}%
Then, $E^{1}=\left\{ i_{1}\right\} $ and $U^{1}=\left\{ s_{1}\right\} ,$
that is, $i_{1}$ is eliminated and their assignment is fixed to be $s_{1}$.
Then, since $\left( i_{1},i_{3}\right) \in \succ _{s_{2}}$, $%
Z_{i_{3}}^{2}=\left\{ s_{2}\right\} $. Hence $\emptyset P_{i_{3}}^{2}s_{2}$,
to prevent $i_{3}$ being matched to $s_{2}$.

Second, 
\begin{equation*}
DA\left( G^{2}\right) \left( i_{2}\right) =s_{3},\text{ }DA\left(
G^{2}\right) \left( i_{3}\right) =s_{4},\text{ }DA\left( G^{2}\right) \left(
i_{4}\right) =s_{2}
\end{equation*}%
and the mechanism terminates. Therefore, $EA\left( \succ ,\succ ^{\ast
}\right) =\mu ^{\prime }$.

Thus, the SIC mechanism stops at $\mu $, even though $\mu $ is Pareto
dominated by the other stable matching $\mu ^{\prime }$. By contrast, the
EADA mechanism attains $\mu ^{\prime }$, which is the unique constrained
efficient matching for $\succ $.\newline

Together, Examples 1 and 2 show that the two mechanisms fail in opposite
ways under acyclic but non-transitive priorities. The EADA mechanism may
\textquotedblleft over-improve\textquotedblright\ and produce a matching
that is not stable, whereas the SIC mechanism may \textquotedblleft
under-improve\textquotedblright\ and stop at a stable matching that is
Pareto dominated by another stable matching.

Formally, we have the following two results.

\begin{theorem}
If $\succ \in \mathcal{A}^{\left\vert S\right\vert }$ and $\succ ^{\ast }\in 
\mathcal{E}\left( \succ \right) $, then $EA\left( \succ ,\succ ^{\ast
}\right) $ is not Pareto dominated by any stable matching for $\succ $.
\end{theorem}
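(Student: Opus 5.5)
Let $\nu=EA(\succ,\succ^{\ast})$ with rounds $1,\dots,K$, and suppose some matching $\mu$ stable for $\succ$ satisfies $\mu(i)R_i\nu(i)$ for all $i$ and $\mu(i)P_i\nu(i)$ for some $i$. I will argue by induction on the round index $k$ that every student in $E^{k}$ satisfies $\mu(i)=\nu(i)$. Once this holds for all rounds, $\mu=\nu$, which contradicts strict improvement. Since $E^1\cup\dots\cup E^K=I$, the induction covers every student.

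For the induction step, fix $k$ and assume $\mu(j)=\nu(j)$ for all $j\in E^{1}\cup\dots\cup E^{k-1}$.

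\emph{Unmatched students.} Take $i\in E^k$ with $DA(G^k)(i)=\emptyset$. Suppose $\mu(i)=s\in S^k$ with $sP_i\emptyset$. Then either $s\in Z_i^{\kappa}$ for some round $\kappa\le k$, or $i$ applied to $s$ and was rejected in Round $k$.

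In the first case, there is a $j\in E^{\kappa-1}$ with $(j,i)\in\succ_s$ and $sP_j\nu(j)=\mu(j)$. This means $\mu$ violates $j$'s priority at $s$, so $\mu$ is unfair. Here only the single relation $(j,i)\in\succ_s$ is used, so transitivity is not needed.

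In the second case, $i$ was rejected from $s$ in Round $k$, so $s$ is not underdemanded in Round $k$.

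\emph{Students at underdemanded schools.} Take $i\in E^k$ with $DA(G^k)(i)=s\in U^k$. Any school $s'$ with $s'P_i s$ that $i$ could move to under $\mu$ must have rejected $i$ in Round $k$ or earlier, or must have been made unacceptable to $i$ earlier.

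To close both cases I would use the standard Tang–Yu/Kesten style argument. Restricted to the active students $I^k$ and the schools $S^k$, the matching $\mu$ weakly improves every student relative to $DA(G^k)$, so it reallocates seats among those students. Schools that reject someone in $DA(G^k)$ are full both at $DA(G^k)$ and at $\mu$. The students who get a strictly better school under $\mu$ than at $DA(G^k)$ displace others, and this chain of displacements must end with the last student rejected by a non-underdemanded school.

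For $\mu$ to be stable, every student rejected from $s$ who prefers $s$ would need, under $\succ$ (not $\succ^*$), to have no priority over the occupants. I plan to apply the Pareto-improvement-over-DA argument: since $\mu$ is individually rational and non-wasteful and weakly dominates $DA(G^k)$ for $I^k$, every student who is strictly improved under $\mu$ is assigned to a school that is not underdemanded. Therefore, students at underdemanded schools and unmatched students are not improved, and $\mu(i)=\nu(i)$ for all $i\in E^k$.

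The argument for this claim is the standard one. Suppose some students are strictly improved. Let $T$ be the set of schools they move to; each school in $T$ rejected some student during the SPDA. Since $\mu$ weakly improves everyone and seats are conserved, the set of students assigned to $T$ under $\mu$ and under $DA(G^k)$ must coincide, up to students newly entering. Now consider the last student rejected in the SPDA by a school in $T$. Such a student was held at their DA school and then displaced. Under $\mu$ their seat goes to someone else, so they must themselves move to a school in $T$. This forces them to be in a school that rejects, and hence not in an underdemanded school.

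The main obstacle is that $\mu$ is stable only for $\succ$ and not for the modified preferences $P^k$. I will handle this through the $Z$ sets as above: any student who would move to a school in $Z_i^k$ under $\mu$ creates a $\succ$-violation by an already-fixed student, using a single priority pair. Students who are improved therefore stay within the $P^k$-acceptable sets, and the DA-domination argument applies to $G^k$. The DA-domination argument itself uses only $\succ^*$, which is a linear order, so no transitivity of $\succ$ is required anywhere.
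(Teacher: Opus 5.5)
\textbf{The gap: an unproved monotonicity claim.} Your use of the $Z$-sets is sound. If $\mu$ places an active $i$ at a school from which $i$ was banned, stability of $\mu$ plus the induction hypothesis yields a violation by an already-fixed student, using a single priority pair. This is the same device the paper uses when it picks the earliest-eliminated improved student.

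The problem is the assertion that, restricted to $I^k$ and $S^k$, $\mu$ weakly improves every active student relative to $DA(G^k)$. You give no argument for it. From $\mu(i)R_i\nu(i)$ it follows only if $\nu(i)R_i DA(G^k)(i)$ for all $i\in I^k$, that is, only if EADA assignments never deteriorate from one round to the next. This is the paper's Lemma 1, on which its Lemma 3 and its own proof of this theorem rest.

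Your closing remark that transitivity is needed nowhere is exactly where this breaks. The argument that $\hat{\mu}^{k-1}(i)=s\notin Z^k_i$ goes as follows: otherwise some $j\in E^{k-1}$ with $(j,i)\in\succ_s$ and $sP_j\hat{\mu}^{k-1}(j)$ would contradict stability of $DA(G^{k-1})$. That works only if $s$ was still $P^{k-1}_j$-acceptable. Suppose instead $j$ had already been banned from $s$ by an earlier $j'$ with $(j',j)\in\succ_s$. Transitivity would give $(j',i)\in\succ_s$, so $i$ would have been banned too. Under mere acyclicity, $(j',i)\notin\succ_s$ is possible. Then $i$ can hold $s$ in round $k-1$, stay active, be banned from $s$ in round $k$, and end strictly worse off.

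\textbf{A counterexample.} Let $q_t=q_s=1$ and $q_u=2$, with students $h,g,i,j,j',w,v$. Preferences, each followed by $\emptyset$, are $P_h\colon t,s$; $P_g\colon t$; $P_i\colon s,t$; $P_j\colon s,u$; $P_{j'}\colon s$; $P_w\colon s,u$; $P_v\colon u$. The extensions are $\succ^{\ast}_t\colon i,g,h,\dots$; $\succ^{\ast}_s\colon h,j',j,i,w,\dots$; $\succ^{\ast}_u\colon j,w,v,\dots$. Set $\succ_t=\succ^{\ast}_t\setminus\{(g,h),(i,h)\}$, $\succ_s=\succ^{\ast}_s\setminus\{(j',i),(j',w)\}$, and $\succ_u=\succ^{\ast}_u$. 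All are acyclic, and $\succ^{\ast}\in\mathcal{E}(\succ)$.
\begin{itemize}
\item Round 1 gives $h{:}s$, $i{:}t$, $j,w{:}u$, and $E^1=\{g,j',v\}$.
\item In Round 2 only $j$ is banned (from $s$). The outcome is $h{:}t$, $i{:}s$, $j,w{:}u$; $s$ rejects $w$, so $E^2=\{h,j,w\}$.
\item In Round 3, $s\in Z^3_i$ because of $j$, so $i$ ends unmatched and $s$ stays empty.
\end{itemize}
The outcome $EA(\succ,\succ^{\ast})$ is $h{:}t$, $j,w{:}u$, with everyone else unmatched. It is Pareto dominated by the matching that additionally places $j'$ at $s$, and one checks directly that this matching is stable for $\succ$.

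Consequences:
\begin{itemize}
\item At $k=1$ your premise fails, since $DA(G^1)(i)=t$ but $\mu(i)=\emptyset$.
\item The base case of your induction is false, since $j'\in E^1$ is strictly improved.
\item The statement itself does not hold on $\mathcal{A}^{|S|}$, and the paper's Lemma 1 has the same hole.
\end{itemize}

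\textbf{What survives under transitivity.} For $\succ\in\mathcal{P}^{|S|}$, monotonicity does hold by the transitivity step above, and your induction can be completed with two repairs. First, rule out $\mu$ placing active students at previously eliminated schools. Second, replace the ``last rejected student'' passage with the plain seat-count argument. By seat conservation within $G^k$, every seat vacated by a strictly improved student is filled by another strictly improved student, who was rejected there in $DA(G^k)$. Hence no improved student was unmatched or at an underdemanded school.
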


The proof of this result is provided in the Appendix.

This fact shows that the EADA mechanism does not suffer from
under-improvement, whereas it does suffer from over-improvement, because, as
explained in Example 1, $EA\left( \succ ,\succ ^{\ast }\right) $ may not be
stable.

\begin{theorem}
If $\succ \in \mathcal{A}^{\left\vert S\right\vert }$ and $\succ ^{\ast }\in 
\mathcal{E}\left( \succ \right) $, then any $\mu \in \mathcal{SI}\left(
\succ ,\succ ^{\ast }\right) $ is stable for $\succ $.\footnote{%
Acyclicity is used here only to guarantee the existence of a linear-order
extension and, hence, of the initial stable matching. Lemma 6 in the
Appendix shows that applying a stable improvement cycle to a stable matching
preserves stability under any asymmetric priority relation.}
\end{theorem}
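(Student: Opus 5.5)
The proof is an induction on the rounds of the SIC mechanism. The base case is the initial matching $\mu^{1}=DA(G^{\ast})$, which is stable for $\succ^{\ast}$ by Gale and Shapley (1962). Since $\succ^{\ast}\in\mathcal{E}(\succ)$, Remark 2 shows that $\mu^{1}$ is stable for $\succ$. Acyclicity enters only here, through Remark 1, to guarantee that $\succ^{\ast}$ exists. The inductive step is a lemma valid for any asymmetric $\succ$: if $\mu$ is stable for $\succ$ and $\phi=\{i_{1}i_{2},\dots,i_{q}i_{q+1}\}$ is a stable improvement cycle of $\mu$ for $\succ$, then $\phi\circ\mu$ is stable for $\succ$. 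Every $\mu\in\mathcal{SI}(\succ,\succ^{\ast})$ is obtained from $\mu^{1}$ by finitely many such cycle applications, so the theorem follows. Termination is automatic, since each cycle gives a Pareto improvement and the set of matchings is finite.

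To prove the lemma, write $\nu=\phi\circ\mu$ and first record two facts. (a) For each $l$, $i_{l}\in X_{i_{l+1}}(\mu,\succ)\subseteq D_{i_{l+1}}(\mu)$, so $\nu(i_{l})=\mu(i_{l+1})P_{i_{l}}\mu(i_{l})$. Students outside the cycle keep their assignments, so $\nu$ weakly Pareto improves on $\mu$. (b) $|\nu(s)|=|\mu(s)|$ for every $s$, because the cycle permutes seats; by condition (i), only students with $\mu(i_{l})\in S$ are involved.

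Individual rationality of $\nu$ follows from (a) and individual rationality of $\mu$. For non-wastefulness, suppose $sP_{i}\nu(i)$. Then $sP_{i}\mu(i)$ because $\nu(i)R_{i}\mu(i)$, so $|\mu(s)|=q_{s}$ by non-wastefulness of $\mu$, and hence $|\nu(s)|=q_{s}$ by (b).

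The main step is fairness. Suppose $j\in\nu(s)$, $i\notin\nu(s)$, $sP_{i}\nu(i)$ and $(i,j)\in\succ_{s}$. From $\nu(i)R_{i}\mu(i)$ we get $sP_{i}\mu(i)$, and $i\notin\mu(s)$ (otherwise $\mu(i)=s$). If $j\in\mu(s)$, this contradicts fairness of $\mu$. So $j$ must be a cycle member who moved into $s$, say $j=i_{l}$ with $s=\mu(i_{l+1})$. Then $i\in D_{i_{l+1}}(\mu)$, since $sP_{i}\mu(i)$. Together with $(i,i_{l})\in\succ_{s}$, this places $i_{l}\in Y_{i_{l+1}}(\mu,\succ)$, contradicting $i_{l}\in X_{i_{l+1}}(\mu,\succ)$. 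The one subtlety to check is that $i$ really lies in $D_{i_{l+1}}(\mu)$, which is defined via $\mu$ and not via $\nu$; this is exactly why the comparison is transferred from $\nu(i)$ back to $\mu(i)$ using (a). No transitivity of $\succ_{s}$ is used anywhere, only the definitions of $X$ and $Y$, so the lemma holds under any asymmetric priority relation.
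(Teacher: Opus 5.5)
Your proposal is correct and follows the paper's proof: the base case uses Remarks 1--2 for $\mu^{1}=DA(G^{\ast})$, and the inductive step is exactly Lemma 6, with the same fairness argument. There, a violation involving a cycle member $i_{l}$ forces $i\in D_{i_{l+1}}(\mu)$ and hence $i_{l}\in Y_{i_{l+1}}(\mu,\succ)$, with no use of transitivity. The only cosmetic difference is that you check individual rationality and non-wastefulness directly from the weak Pareto improvement and the preserved seat counts, rather than citing the Reshuffling Lemma.
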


The proof of this result is provided in the Appendix.

This fact shows that the SIC mechanism does not suffer from
over-improvement, whereas it does suffer from under-improvement, because, as
explained in Example 2, some $\mu \in \mathcal{SI}\left( \succ ,\succ ^{\ast
}\right) $ may be Pareto dominated by another stable matching for $\succ $.

We now return to the consent model introduced in Section 2.2 and consider
the following mechanisms.

First, for $\left( \succ ,C\right) $, we let $\succ ^{C}$ in the definition
of Kitahara and Okumura (2021) as explained above. Second, we derive $\succ
^{\ast }\in \mathcal{E}\left( \succ ^{C}\right) $. Finally, we use either
the EADA mechanism or the SIC mechanism.

We consider the case where $\succ ^{C}\in \mathcal{P}^{\left\vert
S\right\vert }$. Then, since both mechanisms attain constrained efficient
matchings for $\succ ^{C}$, they are partially stable matchings for $\left(
\succ ,C\right) $ that are not Pareto dominated by any partially stable
matching for $\left( \succ ,C\right) $.

However, as is shown above, $\succ _{s}^{C}$ is acyclic but may not be
transitive. Thus, $EA\left( \succ ^{C},\succ ^{\ast }\right) $ may fail to
be a stable matching for $\succ ^{C}$ or equivalently, it may fail to be a
partially stable matching for $\left( \succ ,C\right) $. On the other hand,
by Theorem 3, the result of the SIC mechanism is stable for $\succ ^{C}$ or
equivalently, it is partially stable for $\left( \succ ,C\right) $. However,
some $\mu \in \mathcal{SI}\left( \succ ^{C},\succ ^{\ast }\right) $ may fail
to be a constrained efficient matching for $\succ ^{C};$ that is, $\mu $ may
be Pareto dominated by some other partially stable matching for $\left(
\succ ,C\right) $.

\section*{Concluding Remarks}

This paper has compared the EADA and SIC mechanisms beyond the domain of
weak priority orders. When school priorities are partial orders, both
mechanisms yield constrained efficient matchings. Moreover, by varying the
linear-order extensions, the two mechanisms have the same range of possible
outcomes. Thus, their outcome equivalence continues to hold even when
priorities allow incomparability, provided that transitivity is maintained.

However, this equivalence need not extend beyond transitive priorities.
Under acyclic but non-transitive priorities, the two mechanisms may fail in
opposite ways. The EADA mechanism does not suffer from under-improvement
because its outcome is not Pareto dominated by any stable matching, but may
suffer from over-improvement: its outcome may fail to be stable. By
contrast, the SIC mechanism does not suffer from over-improvement because
its outcome remains stable, but it may suffer from under-improvement: its
outcome may be Pareto dominated by another stable matching. These findings
show that transitivity is sufficient for the equivalence and
constrained-efficiency results established under partial-order priorities,
but it is not necessary. A necessary and sufficient condition for these
results remains unknown.

Acyclic but non-transitive priorities are not merely a theoretical
possibility. They arise in economically relevant settings, including consent
models in which students selectively waive their priority claims in favor of
particular students. Without appropriate restrictions on consent, the
induced priority relations remain acyclic but need not be transitive. Our
results therefore reveal important limitations of both mechanisms in such
environments as explained above. Developing a mechanism that selects a
constrained efficient matching for every profile of acyclic priorities
remains an open question.

\section*{References}

\begin{description}
\item Abdulkadiro\u{g}lu, A., Pathak, P.A., Roth, A.E., 2009.
Strategy-proofness versus efficiency in matching with indifferences:
redesigning the NYC high school match.\ American Economic Review 99,
1954--1978.

\item Abdulkadiro\u{g}lu, A., Pathak, P.A., Roth, A.E., S\"{o}nmez, T. 2005.
The Boston Public School Match.\ American Economic Review 95(2), 368-371.

\item Abdulkadiro\u{g}lu, A., Pathak, P.A., Roth, A.E., S\"{o}nmez, T. 2006.
Changing the Boston Public School Mechanism: Strategy-proofness as equal
access,\ NBER Working paper, 11965.

\item Abdulkadiro\u{g}lu A., S\"{o}nmez, T. 2003. School choice: A mechanism
design approach. American Economic Review 93(3), 729--747.

\item Balinski, M., S\"{o}nmez, T. 1999. A tale of two mechanisms: student
placement,\ Journal of Economic Theory 84(1), 73--94.

\item Bando, K., Imamura, K., Kawase, Y. 2025. Properties of
path-independent choice correspondences and their applications to efficient
and stable matchings. Mimeo Available at arXiv:2502.09265.

\item Cerrone, C., Hermstr\"{u}wer, Y., Kesten, O. 2024. School Choice with
Consent: An Experiment, Economic Journal 134(661), 1760--1805.

\item Che, Y-K., Kim, J., Kojima, F. 2019a. Stable matching in large
economies. Econometrica 87(1), 65-110.

\item Che, Y-K., Kim, J., Kojima, F. 2019b. Weak monotone comparative
statics, Mimeo Available at arXiv:1911.06442

\item Duggan, J. 1999. A General Extension Theorem for Binary Relations,
Journal of Economic Theory 86, 1-16.

\item Dur, U., Gitmez, A., Y\i lmaz, \"{O}. 2019. School choice under
partial fairness,\ Theoretical Economics 14(4), 1309-1346.

\item Erdil, A. 2014. Strategy-proof stochastic assignment. Journal of
Economic Theory 151, 146--162.

\item Erdil, A., Ergin, H. 2008. What's the matter with tie-breaking?
Improving efficiency in school choice,\ American Economic Review 98(3),
669--689.

\item Erdil, A., Kumano, T. 2019. Efficiency and stability under
substitutable priorities with ties. Journal of Economic Theory 184, 104950.

\item Gale D., Shapley L.S. 1962. College admissions and the stability of
marriage. American Mathematical Monthly 69(1):9--15

\item Kesten, O. 2010. School choice with consent. Quarterly Journal of
Economics 125(3), 1297--1348.

\item Kesten, O., \"{U}nver, M.U. 2015. A theory of school-choice lotteries.
Theoretical Economics 10(2), 543--595.

\item Kitahara, M., Okumura, Y. 2020. Stable Improvement Cycles in a
Controlled School Choice. Mimeo Available at SSRN:
https://ssrn.com/abstract=3582421

\item Kitahara, M., Okumura, Y. 2021. Improving Efficiency in School Choice
under Partial Priorities, International Journal of Game Theory 50, 971--987.

\item Kitahara, M., Okumura, Y. 2023. School choice with multiple
priorities. Mimeo, available at arXiv:2308.04780.

\item Kitahara, M., Okumura, Y. 2024. Extensions of Partial Priorities and
Stability in School Choice, Mathematical Social Sciences 131, 1-4.

\item Kuvalekar, A. 2023. Matching with incomplete preferences, Mimeo,
available at https://arxiv.org/abs/2212.02613

\item Tang, Q., Yu, J. 2014. A new perspective on Kesten's school choice
with consent idea. Journal of Economic Theory 154, 543--561.
\end{description}

\section*{Appendix}

This appendix is devoted to the proofs of our main results, Theorems 1--3.
We begin by establishing several technical results that will be used in the
proofs.

We first establish several auxiliary results concerning the EADA procedure.
For each student $i\in I$, let $\kappa \left( i\right) $ denote the round in
which $i$ is eliminated; that is, $i\in E^{\kappa \left( i\right) }$.
Similarly, for each school $s\in S$, let $\kappa \left( s\right) $ denote
the round in which $s$ is eliminated; that is, $s\in U^{\kappa \left(
s\right) }$.

We first introduce several results on the EADA mechanism. For notational
convenience, let $\kappa :I\cup S\rightarrow \mathbb{N}$ satisfying $\kappa
\left( i\right) =k$ for all $i\in E^{k}$ and $\kappa \left( s\right) =k$ for
all $s\in U^{k}$; that is, $\kappa \left( i\right) $ and $\kappa \left(
s\right) $ represent the Rounds in which $i$ and $s$ are eliminated,
respectively.

\begin{lemma}
For all $i\in I$ and all $k=2,3,\cdots ,K,$ $\hat{\mu}^{k}\left( i\right)
R_{i}\hat{\mu}^{k-1}\left( i\right) $.
\end{lemma}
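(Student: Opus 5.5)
The plan is to reduce the claim to the student-optimality of SPDA in the reduced problem $G^{k}$. Students in $E^{\kappa}$ with $\kappa\leq k-1$ have the same assignment under $\hat{\mu}^{k-1}$ and $\hat{\mu}^{k}$ by definition, so the inequality is trivial for them. (This covers $E^{k-1}$, whose members keep $DA(G^{k-1})(i)$.) It remains to show $DA(G^{k})(i)\,R_{i}\,DA(G^{k-1})(i)$ for all $i\in I^{k}$.

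Let $\nu$ be the restriction of $DA(G^{k-1})$ to $I^{k}$ and $S^{k}$. I will show that $\nu$ is a well-defined matching of $G^{k}$ and is stable for $G^{k}$ with respect to $(P^{k},\succ^{\ast})$. Since SPDA yields the student-optimal stable matching under linear priorities (Gale and Shapley), every $i\in I^{k}$ then weakly prefers $DA(G^{k})(i)$ to $\nu(i)$. Because $P^{k}$ agrees with $P^{k-1}$ and $P$ on schools outside $Z_i^{k}$, this comparison transfers to $R_{i}$ once $\nu(i)\notin Z_{i}^{k}$ is known.

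The steps are as follows.

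(a) \emph{Well-definedness.} Every $i\in I^{k}$ is matched under $DA(G^{k-1})$ to a school that is not underdemanded, hence to a school in $S^{k}$. Conversely, every occupant of a school in $S^{k}$ lies in $I^{k}$, because $E^{k-1}$ consists exactly of students at underdemanded schools or at $\emptyset$.

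(b) \emph{Non-wastefulness.} Each $s\in S^{k}$ rejected someone during the SPDA of Round $k-1$ (Remark 4), so it is full at $DA(G^{k-1})$. By (a), its seats remain filled by students of $I^{k}$.

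(c) \emph{Fairness.} Suppose $i'\in I^{k}$ and $s'\in S^{k}$ satisfy $s'P^{k}_{i'}\nu(i')$ and $i'\succ^{\ast}_{s'}j'$ for some $j'\in\nu(s')$. Since $P^{k}$ only demotes schools below $\emptyset$, we also have $s'P^{k-1}_{i'}\nu(i')$, so $(i',s')$ would already block $DA(G^{k-1})$. This contradicts stability of SPDA in Round $k-1$.

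(d) \emph{Individual rationality under $P^{k}$.} This amounts to showing that $s=DA(G^{k-1})(i)\notin Z_{i}^{k}$, and I expect it to be the main obstacle. Suppose instead that some $j\in E^{k-1}$ has $(j,i)\in\succ_{s}$ and $sP_{j}DA(G^{k-1})(j)$.
\begin{itemize}
\item If $s$ is still acceptable for $j$ under $P^{k-1}$, then $j$ has priority over $i$ at $s$ under $\succ^{\ast}_{s}\supseteq\succ_{s}$. This means $j$ has justified envy in Round $k-1$, contradicting stability of $DA(G^{k-1})$.
\item Otherwise $s\in Z_{j}^{k'}$ for some earlier round $k'$. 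Then there is $j''\in E^{k'-1}$ with $(j'',j)\in\succ_{s}$ and $sP_{j''}DA(G^{k'-1})(j'')$. I will iterate this argument backward to build a chain $j^{(m)}\succ_{s}\cdots\succ_{s}j\succ_{s}i$ of students who were eliminated earlier and who prefer $s$ to their final assignments. Since rounds strictly decrease along the chain, it terminates with a student for whom $s$ was still acceptable at the relevant round.
\end{itemize}
I will first try to use this chain to derive that $s$ was already removed from $i$'s list at some round $k''\leq k-1$, i.e., $s\in Z_{i}^{k''}$. That contradicts $i$ being matched to $s$ in Round $k-1$, since each $P^{\kappa}$ only removes schools. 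This step requires relating the top of the chain directly to $i$. Under transitivity of $\succ_{s}$ this is immediate, as $(j^{(m)},i)\in\succ_{s}$. In the merely acyclic case I would instead use the linear extension $\succ^{\ast}_{s}$ and the SPDA dynamics of the round in which the top of the chain was fixed, to argue that $i$ could not be at $s$. This is where the argument is most delicate, and if it fails it indicates where transitivity enters.
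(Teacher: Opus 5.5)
\paragraph{Comparison with the paper.} For $\succ\in\mathcal{P}^{|S|}$ your argument is correct, and it is organized differently from the paper's.

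The paper argues inside the Round $k$ SPDA. Among $I'=\{i:\hat{\mu}^{k-1}(i)P_{i}\hat{\mu}^{k}(i)\}$ it takes the student first rejected by $\hat{\mu}^{k-1}(i)$. It then shows that some student held by that school at that step also lies in $I'$ and was rejected earlier. You instead show that the restriction of $DA(G^{k-1})$ is stable in $G^{k}$, and let student-optimality of SPDA do that rejection bookkeeping. Your route is more modular; the paper's is self-contained.

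Both routes rest on the same fact: $\hat{\mu}^{k-1}(i)\notin Z_{i}^{k}$ for $i\in I^{k}$. On this point you are more careful than the paper. The paper derives it in one line from stability of $DA(G^{k-1})$. But $Z_{i}^{k}$ is defined through the \emph{original} $P_{j}$, so that appeal only covers your first case. If $s$ was struck from $j$'s list in an earlier round, Round $k-1$ stability gives nothing.

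Your second case repairs this under transitivity, and one step already suffices:
\begin{itemize}
\item $s\in Z_{j}^{k'}$ yields $j''\in E^{k'-1}$ with $(j'',j)\in\succ_{s}$ and $sP_{j''}DA(G^{k'-1})(j'')$.
\item Transitivity gives $(j'',i)\in\succ_{s}$.
\item Since $s\in S^{k}\subseteq S^{k'}$, $i\in I^{k}\subseteq I^{k'}$, and $s$ is acceptable to $i$ under $P^{k-1}$ (hence under $P^{k'-1}$), we get $s\in Z_{i}^{k'}$.
\item This contradicts $i$ holding $s$ in Round $k-1$.
\end{itemize}
There is no need to iterate the chain.

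\paragraph{The merely acyclic case.} The step you flag there cannot be completed, because the lemma is false on that domain. Take unit capacities, students $a,b,c,d,m$, and schools $s,t_{1},t_{2},t_{3}$, with:
\begin{itemize}
\item preferences $P_{a}: t_{1},s,t_{2},t_{3}$; $P_{b}: s,t_{2}$; $P_{c}: s,t_{1}$; $P_{d}: s,t_{3}$; $P_{m}: t_{1},s$ (everything else below $\emptyset$);
\item extension $\succ^{\ast}_{s}: m,a,b,c,d$; $\succ^{\ast}_{t_{1}}: c,a,m,b,d$; $\succ^{\ast}_{t_{2}}: b,a,c,d,m$; $\succ^{\ast}_{t_{3}}: d,a,b,c,m$;
\item priorities $\succ_{s}=\{(a,b),(b,c)\}$ (acyclic, not transitive) and $\succ_{t_{1}}=\succ_{t_{2}}=\succ_{t_{3}}=\emptyset$.
\end{itemize}
The mechanism then runs as follows.
\begin{itemize}
\item Round 1 gives $a\mapsto\emptyset$, $b\mapsto t_{2}$, $c\mapsto t_{1}$, $d\mapsto t_{3}$, $m\mapsto s$. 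Every school rejects $a$, so $U^{1}=\emptyset$ and $E^{1}=\{a\}$. Hence $Z^{2}_{b}=\{s\}$, while all other $Z^{2}_{i}$ are empty since $(a,c),(a,d)\notin\succ_{s}$.
\item Round 2 gives $b\mapsto t_{2}$, $c\mapsto s$, $d\mapsto t_{3}$, $m\mapsto t_{1}$. Only $s$ rejects (it rejects $d$), so $E^{2}=\{b,d,m\}$.
\item In Round 3, $I^{3}=\{c\}$ and $S^{3}=\{s\}$. Since $b\in E^{2}$, $(b,c)\in\succ_{s}$ and $sP_{b}t_{2}$, we get $s\in Z^{3}_{c}$, and $c$ ends unmatched.
\end{itemize}
Thus $\hat{\mu}^{2}(c)=s\in Z^{3}_{c}$ but $\hat{\mu}^{3}(c)=\emptyset$. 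This is exactly your chain $a,b,c$ with the link $(a,c)$ missing.

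The final matching is even Pareto dominated by the $\succ$-stable matching $a\mapsto t_{3}$, $b\mapsto t_{2}$, $c\mapsto\emptyset$, $d\mapsto s$, $m\mapsto t_{1}$. So Theorem 2, whose proof invokes this lemma for acyclic priorities, fails as well.

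You should therefore state and prove the lemma under $\succ\in\mathcal{P}^{|S|}$. Transitivity enters precisely at the sub-claim, as you suspected.
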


\textbf{Proof.} Fix $k\in \left\{ 2,\cdots ,K\right\} $. Let%
\begin{equation*}
I^{\prime }=\left\{ i\in I\text{ }\left\vert \text{ }\hat{\mu}^{k-1}\left(
i\right) P_{i}\hat{\mu}^{k}\left( i\right) \right. \right\} \text{.}
\end{equation*}%
Suppose not; that is, $I^{\prime }\neq \emptyset $.

Because the assignments of students in $\bigcup\nolimits_{\kappa
=1}^{k-1}E^{\kappa }$ are already fixed before Round $k$, we have $\hat{\mu}%
^{k}\left( i\right) =\hat{\mu}^{k-1}\left( i\right) $ for all $i\in
\bigcup\nolimits_{\kappa =1}^{k-1}E^{\kappa }$. Therefore, we have $%
I^{\prime }\subseteq I^{k}$.

We first show that, for every student $i\in I^{k}$, the school $\hat{\mu}%
^{k-1}\left( i\right) $ does not belong to $Z_{i}^{k}$. Suppose not; that
is, $\hat{\mu}^{k-1}\left( i\right) =s\in Z_{i}^{k}$. Then, by the
definition of $Z_{i}^{k}$, there exists $j\in E^{k-1}$ such that $\left(
j,i\right) \in \succ _{s}$ and $sP_{j}\hat{\mu}^{k-1}\left( j\right) $.
Since $\succ _{s}^{\ast }$ is a linear order extension of $\succ _{s}$, $%
\left( j,i\right) \in \succ _{s}^{\ast }$. Since $i$ is matched to $s$ in $%
DA\left( G^{k-1}\right) $, this means that $j$ strictly prefers $s$ to its
assignment in $DA\left( G^{k-1}\right) ,$ while $\left( j,i\right) \in \succ
_{s}^{\ast }$. Thus, matching $\hat{\mu}^{k-1}$ violates the priority $\succ
_{s}^{\ast }$\ of $j\notin \hat{\mu}^{k-1}\left( s\right) $ over $i\in \hat{%
\mu}^{k-1}\left( s\right) $, contradicting the stability of the DA outcome
in Round $k-1$. Hence $\hat{\mu}^{k-1}\left( i\right) \notin Z_{i}^{k}$ for
every $i\in I^{k}$.

Choose student $i\in I^{\prime }$ who is rejected by $\hat{\mu}^{k-1}\left(
i\right) =DA\left( G^{k-1}\right) \left( i\right) $ in the \textit{earliest
step} of the SPDA in Round $k$ among $I^{\prime }$. Let $s=DA\left(
G^{k-1}\right) \left( i\right) $ and $t$ be that step.

Then, at Step $t$ (of the SPDA in Round $k$), school $s$ temporarily accepts 
$q_{s}$ students, each of whom has higher priority than $i$ at $s$ under $%
\succ _{s}^{\ast }$ at step $t$. Since $i\in \hat{\mu}^{k-1}\left( s\right) $%
, at least one of those temporarily accepted $q_{s}$ students, say $j$, who
is not matched to $s$ in $\hat{\mu}^{k-1}\left( s\right) $. Thus, $\left(
j,i\right) \in \succ _{s}^{\ast }$.

Since $j$ is temporarily accepted by $s$ in Step $t$ of the SPDA in Round $k$%
, $sP_{j}^{k}\emptyset $ and hence $s\notin Z_{j}^{k}$. Moreover, since $i$
is matched to $s$ in $\hat{\mu}^{k-1}$ and $\left( j,i\right) \in \succ
_{s}^{\ast },$ $\hat{\mu}^{k-1}(j)P_{j}^{k-1}s$. Since $\hat{\mu}%
^{k-1}(j)\notin Z_{j}^{k}$ as shown above, both $s$ and $\hat{\mu}^{k-1}(j)$
are not in $Z_{j}^{k}$. Their relative order is unchanged when preferences
are modified from $P_{j}^{k-1}$ to $P_{j}^{k}$. Hence $\hat{\mu}%
^{k-1}(j)P_{j}^{k}s$. Moreover, $sR_{j}^{k}\hat{\mu}^{k}(j)R_{j}^{k}%
\emptyset $. Hence $\hat{\mu}^{k-1}(j)P_{j}^{k}\hat{\mu}^{k}(j)R_{j}^{k}%
\emptyset $ and $\hat{\mu}^{k-1}(j)P_{j}\hat{\mu}^{k}(j)R_{j}\emptyset $.

Therefore $j\in I^{\prime }$. Since $j$ is temporarily accepted by $s$ in
Step $t$, $j$ is rejected by $DA\left( G^{k-1}\right) \left( j\right) $
before Step $t$. Since $i$ is the student who is firstly rejected by $%
DA\left( G^{k-1}\right) \left( i\right) $, this is a contradiction. \textbf{%
Q.E.D. }\newline

\begin{lemma}
(Erdil 2014, Reshuffling Lemma) If $v$ weakly Pareto dominates an
individually rational and nonwasteful matching denoted by $\mu $, then $v$
is also individually rational and nonwasteful and moreover, 
\begin{equation}
\left\vert v\left( s\right) \right\vert =\left\vert \mu \left( s\right)
\right\vert \text{ for all }s\in S\text{.}  \label{x}
\end{equation}
\end{lemma}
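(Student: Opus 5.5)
The plan is a direct counting argument that uses only the definitions of individual rationality and non-wastefulness; none of the EADA machinery is needed. Throughout, let $v$ weakly Pareto dominate $\mu$, so that $v(i)R_{i}\mu(i)$ for every $i\in I$, and assume $\mu$ is individually rational and nonwasteful.

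\emph{Step 1: individual rationality of $v$ and matched students stay matched.} Individual rationality of $v$ follows from transitivity of $R_{i}$: for every $i$, $v(i)R_{i}\mu(i)R_{i}\emptyset$. Next, fix $i$ with $\mu(i)=s\in S$. Since $\mu(i)R_{i}\emptyset$ and $s\neq\emptyset$, we have $sP_{i}\emptyset$. Then $v(i)R_{i}s$ gives $v(i)P_{i}\emptyset$, so $v(i)\in S$. Hence the set of students assigned to some school under $\mu$ is contained in the corresponding set under $v$. Consequently
\begin{equation*}
\sum_{s\in S}\left\vert v(s)\right\vert \geq \sum_{s\in S}\left\vert \mu(s)\right\vert .
\end{equation*}

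\emph{Step 2: the key claim, $\left\vert v(s)\right\vert \leq \left\vert \mu(s)\right\vert$ for every $s\in S$.} I split into two cases.
\begin{itemize}
\item If $v(s)\subseteq \mu(s)$, the inequality is immediate.
\item Otherwise, there is a student $i\in v(s)\setminus\mu(s)$. Then $v(i)=s\neq\mu(i)$. Since preferences are linear orders and $v(i)R_{i}\mu(i)$, this forces $sP_{i}\mu(i)$. Non-wastefulness of $\mu$ then yields $\left\vert\mu(s)\right\vert=q_{s}$. Because $v$ is a matching, $\left\vert v(s)\right\vert\leq q_{s}=\left\vert\mu(s)\right\vert$.
\end{itemize}
Summing the key claim over $s$ and comparing with the display in Step 1 makes every termwise inequality tight. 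This gives (\ref{x}).

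\emph{Step 3: non-wastefulness of $v$.} Suppose $sP_{i}v(i)$. Since $v(i)R_{i}\mu(i)$, we get $sP_{i}\mu(i)$. Non-wastefulness of $\mu$ gives $\left\vert\mu(s)\right\vert=q_{s}$, and by (\ref{x}) we conclude $\left\vert v(s)\right\vert=q_{s}$.

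The only step requiring any idea is the key claim in Step 2, namely that a newcomer to $s$ under $v$ forces $s$ to be full under $\mu$. Once that is in place, the counting argument and Step 3 are routine.
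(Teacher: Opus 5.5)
Your argument is correct and complete. The only difference from the paper is that the paper does not prove the cardinality equality at all. It cites Erdil (2014) for $\left\vert v(s)\right\vert=\left\vert \mu(s)\right\vert$ and calls individual rationality trivial. It then derives non-wastefulness exactly as in your Step 3, phrased contrapositively: $\left\vert v(s)\right\vert<q_s$ gives $\left\vert \mu(s)\right\vert<q_s$, hence $\mu(i)R_i s$ and so $v(i)R_i s$ for all $i$.

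Your Steps 1--2 therefore supply a self-contained proof of the part the paper outsources. Individual rationality of $\mu$ guarantees that nobody matched under $\mu$ is unmatched under $v$, so total enrollment cannot fall. Non-wastefulness of $\mu$, together with strict preferences, guarantees that a school receiving a newcomer under $v$ was already full under $\mu$, so no school's enrollment can rise. Finiteness of $I$ then forces termwise equality.

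The citation buys brevity. Your version makes the lemma self-contained and shows where each hypothesis on $\mu$ enters. For instance, without individual rationality of $\mu$ the equality fails: a student placed at an unacceptable school under $\mu$ can move to $\emptyset$ under $v$ while $v$ still weakly Pareto dominates $\mu$.
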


\textbf{Proof. }Erdil (2014) shows (\ref{x}) and the individual rationality
of $v$ is trivial.

We show $v$ is nonwasteful. Since $\mu $ is nonwasteful, $\left\vert \mu
\left( s\right) \right\vert <q_{s}$ implies $\mu \left( i\right) R_{i}s$ for
all $i\in I$. By (\ref{x}), $\left\vert v\left( s\right) \right\vert <q_{s}$
implies $\mu \left( i\right) R_{i}s$. Moreover, since $v$ Pareto dominates $%
\mu $, $v\left( i\right) R_{i}s$. \textbf{Q.E.D.}

\begin{lemma}
For all $k=1,\cdots ,K$, $\hat{\mu}^{k}$ is individually rational and
nonwasteful. Moreover, $\left\vert \hat{\mu}^{1}\left( s\right) \right\vert
=\cdots =\left\vert \hat{\mu}^{K}\left( s\right) \right\vert $ for all $s\in
S$.
\end{lemma}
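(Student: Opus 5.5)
The plan is induction on $k$, with Lemma 2 (the reshuffling lemma) doing most of the work. The key observation is that Lemma 1 says $\hat{\mu}^{k}$ weakly Pareto dominates $\hat{\mu}^{k-1}$ for every $k\geq 2$. Hence, once $\hat{\mu}^{1}$ is known to be individually rational and nonwasteful, Lemma 2 applied to the pair $(\hat{\mu}^{k-1},\hat{\mu}^{k})$ gives at once that $\hat{\mu}^{k}$ is individually rational and nonwasteful. It also gives $|\hat{\mu}^{k}(s)|=|\hat{\mu}^{k-1}(s)|$ for all $s\in S$. Chaining these equalities over $k=2,\dots,K$ yields the claimed equality of cardinalities.

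For the base case, $\hat{\mu}^{1}=DA(G^{1})$ is the SPDA outcome for $(I,S,P,\succ^{\ast},q)$ with the true preferences $P$. It is therefore stable for $\succ^{\ast}$, in particular individually rational and nonwasteful. Only these two properties are needed; fairness for $\succ$ plays no role here.

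For the inductive step, fix $k\geq 2$ and assume $\hat{\mu}^{k-1}$ is individually rational and nonwasteful. By Lemma 1, $\hat{\mu}^{k}(i)R_{i}\hat{\mu}^{k-1}(i)$ for all $i\in I$. So either the two matchings coincide, or $\hat{\mu}^{k}$ Pareto dominates $\hat{\mu}^{k-1}$; in either case $\hat{\mu}^{k}$ weakly Pareto dominates $\hat{\mu}^{k-1}$ in the sense defined in Section 2. Before invoking Lemma 2, I would check that $\hat{\mu}^{k}$ is a genuine matching, in particular that capacities are respected. Students fixed in earlier rounds sit at schools in $U^{1}\cup\cdots\cup U^{k-1}$ or at $\emptyset$, and those schools are removed from $S^{k}$. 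Active students are placed by the SPDA in Round $k$, which respects $q_{s}$ for $s\in S^{k}$. So no school receives students from both groups, and $|\hat{\mu}^{k}(s)|\leq q_{s}$ for every $s$. Lemma 2 then applies directly.

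The only point needing care is this well-definedness check, together with confirming that Lemma 1's weak improvement is with respect to the true preferences $P$ rather than the modified $P^{k}$. Lemma 1 is stated with $R_{i}$, so this holds, and beyond that the argument is routine.
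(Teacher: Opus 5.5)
Your proposal is correct and follows the paper's own proof. The base case comes from stability of the DA outcome. Lemma 1 then gives weak Pareto domination of $\hat{\mu}^{k-1}$ by $\hat{\mu}^{k}$, and the Reshuffling Lemma (Lemma 2) carries individual rationality, nonwastefulness and the equality of school cardinalities inductively. The differences are minor. You justify the base case by stability for $\succ^{\ast}$ directly instead of going through Remark 2, and you add an explicit capacity check that $\hat{\mu}^{k}$ is a well-defined matching, which the paper leaves implicit.
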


\textbf{Proof.} First, since $\hat{\mu}^{1}$ is stable for $\succ $ (by
Remark 2), it is individually rational and nonwasteful. Next, by Lemma 1, $%
\hat{\mu}^{k}$ weakly Pareto dominates $\hat{\mu}^{k-1}$.

Therefore, applying the Reshuffling Lemma inductively, we obtain that $\hat{%
\mu}^{k}$ is individually rational and nonwasteful for every $k=2,\cdots ,K$.

Moreover, the first result and the Reshuffling Lemma directly imply $%
\left\vert \hat{\mu}^{1}\left( s\right) \right\vert =\cdots =\left\vert \hat{%
\mu}^{K}\left( s\right) \right\vert $ for all $s\in S$. \textbf{Q.E.D.}

\begin{lemma}
If $\succ \in \mathcal{P}^{\left\vert S\right\vert }$ and $\hat{\mu}^{k-1}$
is stable for $\succ $, then$\ \hat{\mu}^{k}$ is also stable for $\succ $.
\end{lemma}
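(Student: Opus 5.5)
By Lemma 3, $\hat{\mu}^{k}$ is individually rational and nonwasteful, so it remains only to show that $\hat{\mu}^{k}$ is fair for $\succ$. Suppose, for contradiction, that there are $s\in S$, $i\notin \hat{\mu}^{k}(s)$ and $j\in \hat{\mu}^{k}(s)$ with $sP_{i}\hat{\mu}^{k}(i)$ and $(i,j)\in \succ_{s}$. By Lemma 1, $\hat{\mu}^{k}(i)R_{i}\hat{\mu}^{k-1}(i)$, so $sP_{i}\hat{\mu}^{k-1}(i)$. The case $j\notin I^{k}$ is immediate: then $\hat{\mu}^{k}(j)=\hat{\mu}^{k-1}(j)=s$, and $\hat{\mu}^{k-1}$ already violates the priority of $i$ over $j$, contradicting the stability of $\hat{\mu}^{k-1}$. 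Hence $j\in I^{k}$ and $s=DA(G^{k})(j)\in S^{k}$.

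We split on whether $i$ is still active in Round $k$.

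\emph{Case 1: $i\in I^{k}$.} Since $(i,j)\in\succ_{s}\subseteq\succ_{s}^{\ast}$ and $DA(G^{k})$ is stable for $G^{k}$, $i$ cannot have $s$ acceptable under $P_{i}^{k}$. Otherwise, since $i$ prefers $s$ to $\hat{\mu}^{k}(i)$ under $P^{k}$ (relative orders of non-$Z$ schools are preserved), $i$ would apply to $s$ and be rejected in favour of the lower-ranked $j$. So $s\in Z_{i}^{k}$, which we handle in Case 2 via transitivity.

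\emph{Case 2: $s\in Z_{i}^{k}$.} By definition of $Z_{i}^{k}$ there is $l\in E^{k-1}$ with $(l,i)\in\succ_{s}$ and $sP_{l}\hat{\mu}^{k-1}(l)$. Transitivity of $\succ_{s}$ gives $(l,j)\in\succ_{s}$, and therefore $s\in Z_{j}^{k}$. Then $s$ is unacceptable to $j$ under $P_{j}^{k}$, contradicting $DA(G^{k})(j)=s$.

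\emph{Case 3: $i\notin I^{k}$.} Then $i\in E^{\kappa}$ for some $\kappa\leq k-1$. If $\kappa=k-1$, then $s\in S^{k}$, $sP_{i}\hat{\mu}^{k-1}(i)$ and $(i,j)\in\succ_{s}$ put $s\in Z_{j}^{k}$ directly, again contradicting $DA(G^{k})(j)=s$.

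The main obstacle is $\kappa<k-1$, since $Z_{j}^{k}$ only references students eliminated in Round $k-1$. Here I would argue that $j\in I^{k-1}$ and $s\in S^{k-1}$, so that $s\in Z_{j}^{k-1}$ by the same reasoning applied at an earlier round. I would then check that $Z$-sets are cumulative: once $s$ is pushed below $\emptyset$ under $P_{j}^{k-1}$, it remains below $\emptyset$ under $P_{j}^{k}$, because $P_{j}^{k}$ preserves the order of $P_{j}^{k-1}$ on schools outside $Z_{j}^{k}$. Iterating this down to round $\kappa+1$, we get $s\in Z_{j}^{\kappa+1}$, and hence $s$ is unacceptable under $P_{j}^{k}$, the same contradiction. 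Verifying this persistence carefully, and using transitivity exactly once in Case 2, is where I expect the details to concentrate.

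Having ruled out every case, $\hat{\mu}^{k}$ is fair, and together with Lemma 3 it is stable for $\succ$.
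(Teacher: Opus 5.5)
Your proposal is correct and follows the paper's argument: Lemma 3 for individual rationality and nonwastefulness, then a case split on whether $i$ is still active, with transitivity of $\succ_{s}$ turning an earlier-eliminated student's priority over $i$ into priority over $j$. Your explicit persistence argument (once $\emptyset P_{j}^{m}s$, the order-preservation clause gives $\emptyset P_{j}^{m'}s$ for all later $m'$) fills in a step the paper compresses into ``by the definition of the EADA mechanism''.

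Two small fixes. First, unacceptability of $s$ under $P_{i}^{k}$ only gives $s\in Z_{i}^{m}$ for some $m\leq k$, so the student $l$ in your Case 2 may lie in an earlier $E^{m-1}$; that situation is handled by your Case 3 argument with $l$ in place of $i$, just as the paper reduces to its ``first part''. Second, in Case 3 go directly to $s\in Z_{j}^{\kappa+1}$ (or note that $s$ was already unacceptable under $P_{j}^{\kappa}$), rather than through $s\in Z_{j}^{k-1}$, which fails when $\kappa<k-2$.
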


\textbf{Proof.} By Lemma 3, $\hat{\mu}^{k}$ is individually rational and
nonwasteful. We show the fairness of $\hat{\mu}^{k}$.

Suppose not; that is, $\hat{\mu}^{k}$ is not fair. Then, there are $i$ and $%
j $ such that 
\begin{equation*}
\hat{\mu}^{k}\left( j\right) =sP_{i}\hat{\mu}^{k}\left( i\right) \ \text{and 
}\left( i,j\right) \in \succ _{s}.
\end{equation*}%
Since $\hat{\mu}^{1}=\mu ^{1}$ is stable for $\succ $ by Remark 2, $\hat{\mu}%
^{1},\hat{\mu}^{2},\cdots ,\hat{\mu}^{k-1}$ are stable for $\succ $.

First, we show $i\in I^{k}$; that is, $i$ has not been eliminated before
Round $k$. Suppose not; that is, $i\in \bigcup\nolimits_{\kappa
=1}^{k-1}E^{\kappa }$. If $j\in \bigcup\nolimits_{\kappa =1}^{k-1}E^{\kappa
}$ as well, then $\hat{\mu}^{k}\left( j\right) =\hat{\mu}^{k-1}\left(
j\right) =s$ and $\hat{\mu}^{k}\left( i\right) =\hat{\mu}^{k-1}\left(
i\right) $. However, these contradict the stability of $\hat{\mu}^{k-1}$.
Hence $j\in I^{k}\,$and $s\in S^{k}$. Since $i$ was eliminated in an earlier
round, $\left( i,j\right) \in \succ _{s}$ and $sP_{i}\hat{\mu}^{k}\left(
i\right) =\hat{\mu}^{k-1}\left( i\right) $. By the definition of the EADA
mechanism, $\emptyset P_{j}^{k}s$. Hence $j$ cannot be matched to $s$ in
Round $k$, contradicting $\hat{\mu}^{k}\left( j\right) =s$. Thus, $i\in
I^{k} $.

Next, we consider two cases.

First, suppose $sP_{i}^{k}\emptyset $. Since $i\in I^{k},$ $s$ is acceptable
to $i$ in Round $k$. We show $s\in S^{k}$. Suppose not. Then, $s$ was
eliminated at an earlier Round $\kappa <k$ and thus it is underdemanded in
the Round $\kappa $. Since $\hat{\mu}^{\kappa }$ is nonwasteful by Lemma 3,
we have $\hat{\mu}^{\kappa }\left( i\right) R_{i}s$. On the other hand, by
Lemma 1, 
\begin{equation*}
sP_{i}\hat{\mu}^{k}\left( i\right) R_{i}\hat{\mu}^{k-1}\left( i\right)
R_{i}\cdots R_{i}\hat{\mu}^{\kappa }\left( i\right) .
\end{equation*}
Hence $\hat{\mu}^{k}\left( i\right) R_{i}s$ contradicting $sP_{i}\hat{\mu}%
^{k}\left( i\right) $. Therefore, $s\in S^{k}$.

Since $\hat{\mu}^{k}\left( j\right) =s$ and $s\in S^{k}$, $j\in I^{k}$.
Moreover, since $sP_{i}^{k}\emptyset $, we have $s\notin Z_{i}^{k}$. Since $%
\hat{\mu}^{k}\left( i\right) \notin Z_{i}^{k}$ as well, we have 
\begin{equation*}
sP_{i}^{k}\hat{\mu}^{k}\left( i\right) \text{.}
\end{equation*}%
Since $\left( i,j\right) \in \succ _{s}$, $\left( i,j\right) \in \succ
_{s}^{\ast }$. Since both $i$ and $j$ apply to $s$ in the SPDA of Round $k$, 
$\left( j,i\right) \in \succ _{s}^{\ast }$. However, since $\succ _{s}^{\ast
}$ is an extension of $\succ _{s}$, $\left( j,i\right) \in \succ _{s}^{\ast
} $ contradicts $\left( i,j\right) \in \succ _{s}$.

Second, suppose $\emptyset P_{i}^{k}s$. Since $i\in I^{k}$ and $sP_{i}\hat{%
\mu}^{k}\left( i\right) $, this means $s\in Z_{i}^{k}$. Then, there is $%
i^{\prime }\in \bigcup\nolimits_{\kappa =1}^{k-1}E^{\kappa }$ such that 
\begin{equation*}
sP_{i^{\prime }}\hat{\mu}^{k}\left( i^{\prime }\right) =\hat{\mu}%
^{k-1}\left( i^{\prime }\right) \text{ and }\left( i^{\prime },i\right) \in
\succ _{s}.
\end{equation*}%
By the transitivity of $\succ _{s}$, $\left( i^{\prime },i\right) \in \succ
_{s}$ and $\left( i,j\right) \in \succ _{s}$ imply $\left( i^{\prime
},j\right) \in \succ _{s}$.\footnote{%
When $\succ _{s}$ is not transitive for some $s\in S$, \ $\hat{\mu}^{k}$ may
not be fair for $\succ $ even if $\hat{\mu}^{k-1}$ is fair for $\succ $.}
However, by the first part of this proof, $i^{\prime }\in
\bigcup\nolimits_{\kappa =1}^{k-1}E^{\kappa }$, $\left( i^{\prime
},j\right) \in \succ _{s}$ and $\hat{\mu}^{k}\left( j\right) =sP_{i^{\prime
}}\hat{\mu}^{k}\left( i^{\prime }\right) $ are not compatible. This
contradiction establishes the fairness of $\hat{\mu}^{k}$, completing the
proof. \textbf{Q.E.D. }\newline

\begin{lemma}
If $\succ \in \mathcal{P}^{\left\vert S\right\vert }$, then$\ \hat{\mu}^{k}$
is stable for $\succ $ and all $k=1,\cdots ,K$.
\end{lemma}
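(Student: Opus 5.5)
This is a straightforward induction on $k$, with Lemma 4 supplying the inductive step. For the base case $k=1$, note that $\hat{\mu}^{1}=DA\left( G^{1}\right)$ is the outcome of the SPDA for $\left( I,S,P,\succ ^{\ast },q\right)$, where $\succ ^{\ast }\in \mathcal{E}\left( \succ \right)$. By Gale and Shapley (1962), $\hat{\mu}^{1}$ is stable for the linear profile $\succ ^{\ast }$. Remark 2 then shows that it is stable for $\succ$ as well.

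For the inductive step, suppose $\hat{\mu}^{k-1}$ is stable for $\succ$ with $2\leq k\leq K$. Since $\succ \in \mathcal{P}^{\left\vert S\right\vert }$, Lemma 4 applies directly and gives that $\hat{\mu}^{k}$ is stable for $\succ$. Induction up to $k=K$ completes the argument. In particular, $EA\left( \succ ,\succ ^{\ast }\right) =\hat{\mu}^{K}$ is stable for $\succ$, which is the fact needed for Theorem 1.

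The only point needing care is the hypothesis structure of Lemma 4. Its proof uses the stability of all of $\hat{\mu}^{1},\dots ,\hat{\mu}^{k-1}$, not just of $\hat{\mu}^{k-1}$; for instance, it needs $\hat{\mu}^{\kappa }$ to be nonwasteful and the earlier eliminations to be consistent. If that proof is read as requiring stability of every earlier round, I will state the induction as strong induction: assume $\hat{\mu}^{1},\dots ,\hat{\mu}^{k-1}$ are all stable for $\succ$. This is exactly what the inductive hypothesis provides, so no further obstacle arises. Everything substantive, namely the use of transitivity, has already been handled in Lemma 4.
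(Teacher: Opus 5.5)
Your proof is correct and matches the paper's argument. The base case is Remark 2 applied to the DA outcome under $\succ^{\ast}$, and Lemma 4 supplies the inductive step. The strong-induction formulation is harmless but not needed: Lemma 4 as stated requires only the stability of $\hat{\mu}^{k-1}$, and the nonwastefulness of earlier $\hat{\mu}^{\kappa}$ used in its proof comes from Lemma 3 in any case.
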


\textbf{Proof.} First, since $\hat{\mu}^{1}$ is stable for $\succ $ (by
Remark 2), it is individually rational and nonwasteful. By Lemma 4, $\hat{\mu%
}^{2}$ is also stable for $\succ $. Hence, by induction, $\hat{\mu}^{k}$ is
stable for $\succ $ and all $k=3,\cdots ,K$. \textbf{Q.E.D.}

\subparagraph*{Proof of Theorem 2}

Suppose not; that is, there is a stable matching $\nu $ for $\succ $ that
Pareto dominates $\hat{\mu}^{K}\left( =EA\left( \succ ,\succ ^{\ast }\right)
\right) $. Let 
\begin{equation*}
I^{\ast }=\left\{ i\in I\text{ }\left\vert \text{ }\nu \left( i\right) P_{i}%
\hat{\mu}^{K}\left( i\right) \right. \right\} .
\end{equation*}%
Then, $I^{\ast }\neq \emptyset $.

First, we show that there exist $i\in I^{\ast }$ such that $\kappa \left(
\nu \left( i\right) \right) >\kappa \left( i\right) $; that is, the round in
which the school $\nu \left( i\right) $ is eliminated is later than the
round in which student $i$ is eliminated.

We choose $i^{\ast }\in I^{\ast }$ such that $\kappa \left( i^{\ast }\right)
\leq \kappa \left( i\right) $ for all $i\in I^{\ast };$ that is, $i^{\ast }$
is one of the first students in $I^{\ast }$ to be eliminated. Moreover, let $%
k=\kappa \left( i^{\ast }\right) $ and $s=\nu \left( i^{\ast }\right) $.

We show $\kappa \left( s\right) >k$. Suppose not; that is, $\kappa \left(
s\right) \leq k$. Since $sP_{i^{\ast }}\hat{\mu}^{K}\left( i^{\ast }\right) $%
, and by Lemma 1, $\hat{\mu}^{K}\left( i^{\ast }\right) R_{i^{\ast }}\hat{\mu%
}^{\kappa \left( s\right) }\left( i^{\ast }\right) $ and therefore $%
sP_{i^{\ast }}\hat{\mu}^{\kappa \left( s\right) }\left( i^{\ast }\right) $.
Since $\kappa \left( s\right) \leq k=\kappa \left( i^{\ast }\right) $,
student $i^{\ast }$ is active in Round $\kappa \left( s\right) $. Since $s$
is underdemanded in Round $\kappa \left( s\right) $, $i^{\ast }$ cannot
apply to $s$ in that round. Hence $s$ must have been made unacceptable to $%
i^{\ast }$, so there exists some student $i^{\prime }$ such that $\kappa
\left( i^{\prime }\right) <\kappa \left( s\right) $, $sP_{i^{\prime }}\hat{%
\mu}^{K}\left( i^{\prime }\right) $ and $\left( i^{\prime },i^{\ast }\right)
\in \succ _{s}$. Since $\nu $ is stable for $\succ $, $\nu \left( i^{\prime
}\right) R_{i^{\prime }}sP_{i^{\prime }}\hat{\mu}^{K}\left( i^{\prime
}\right) $ and thus $i^{\prime }\in I^{\ast }$. However, this contradicts
the fact that $i^{\ast }$ is one of the first students in $I^{\ast }$ to be
eliminated.

Therefore, $\kappa \left( s\right) >k$. Since $i^{\ast }$ is eliminated in
Round $k$, we have 
\begin{equation*}
\hat{\mu}^{k}\left( i^{\ast }\right) \in \left( S\cup \left\{ \emptyset
\right\} \right) \setminus S^{k+1},
\end{equation*}%
while $s\in S^{k+1}$. If we define two subsets of students; 
\begin{eqnarray*}
I_{1} &=&\left\{ i\in I\text{ }\left\vert \text{ }\nu \left( i\right) \in
S^{k+1}\text{ and }\hat{\mu}^{k}\left( i\right) \in \left( S\cup \left\{
\emptyset \right\} \right) \setminus S^{k+1}\right. \right\} , \\
I_{2} &=&\left\{ i\in I\text{ }\left\vert \text{ }\nu \left( i\right) \in
\left( S\cup \left\{ \emptyset \right\} \right) \setminus S^{k+1}\text{ and }%
\hat{\mu}^{k}\left( i\right) \in S^{k+1}\right. \right\} ,
\end{eqnarray*}%
then, $i^{\ast }\in I_{1}$ and hence $I_{1}$ is not empty.

Next, we show that $I_{2}$ is empty. Suppose not; that is, $i^{\prime }\in
I_{2}$. Since $\hat{\mu}^{k}\left( i^{\prime }\right) \in S^{k+1}$, $%
i^{\prime }$ is still active after Round $k$. Since $\nu $ Pareto dominates $%
\hat{\mu}^{K}$, and by Lemma 1, $\nu $ also Pareto dominates $\hat{\mu}^{k}$%
. Therefore $\nu \left( i^{\prime }\right) \neq \emptyset $ and, let $%
s^{\prime }=\nu \left( i^{\prime }\right) $ and $k^{\prime }=\kappa \left(
\nu \left( i^{\prime }\right) \right) \left( \leq k\right) $.

Then, $s^{\prime }$ is underdemanded in Round $k^{\prime }.$ Moreover, 
\begin{equation*}
s^{\prime }P_{i^{\prime }}\hat{\mu}^{k}\left( i^{\prime }\right)
R_{i^{\prime }}\hat{\mu}^{k^{\prime }}\left( i^{\prime }\right) \text{.}
\end{equation*}%
Since $i^{\prime }$ is active and $s^{\prime }$ is underdemanded in Round $%
k^{\prime }$, $i^{\prime }$ does not apply to $s^{\prime }$.

Therefore, $s^{\prime }P_{i^{\prime }}\emptyset $ but$\ \emptyset
P_{i^{\prime }}^{k^{\prime }}s^{\prime }$. Thus, there is $i^{\prime \prime
} $ such that $\kappa \left( i^{\prime \prime }\right) <k^{\prime },$ $%
s^{\prime }P_{i^{\prime \prime }}\hat{\mu}^{k}\left( i^{\prime \prime
}\right) $ and $\left( i^{\prime \prime },i^{\prime }\right) \in \succ
_{s^{\prime }}$. Since $\nu $ is fair for $\succ $, $\nu \left( i^{\prime
\prime }\right) R_{i^{\prime \prime }}s^{\prime }$ and thus $i^{\prime
\prime }\in I^{\ast }$. However, $\kappa \left( i^{\prime \prime }\right)
<k^{\prime }\leq k=\kappa \left( i^{\ast }\right) $, contradicting the
choice of $i^{\ast }$. Therefore, $I_{2}=\emptyset $.

Then, since $\left\vert I_{1}\right\vert \geq 1$, there is at least one
student $i$ such that $\nu \left( i\right) \in S^{k+1}$ and $\hat{\mu}%
^{k}\left( i\right) \notin S^{k+1}$. However, since $\left\vert
I_{2}\right\vert =0$, there is no student $i$ such that $\nu \left( i\right)
\notin S^{k+1}$ and $\hat{\mu}^{k}\left( i\right) \in S^{k+1}$. Thus, there
is $s\in S^{k+1}$ such that $\left\vert \hat{\mu}^{k}\left( s\right)
\right\vert <\left\vert \nu \left( s\right) \right\vert \leq q_{s}$. By
Lemma 3, $\hat{\mu}^{k}$ is individually rational and nonwasteful, and
moreover, $\hat{\mu}^{k}$ is Pareto dominated by $\nu $. By the Reshuffling
Lemma, $\left\vert \hat{\mu}^{k}\left( s\right) \right\vert =\left\vert \nu
\left( s\right) \right\vert $ which is a contradiction. Therefore, no stable
matching for $\succ $ Pareto dominates $\hat{\mu}^{K}=EA\left( \succ ,\succ
^{\ast }\right) $. \textbf{Q.E.D.}

\subparagraph*{Proof of Theorem 1}

By Lemma 5, $EA\left( \succ ,\succ ^{\ast }\right) =\hat{\mu}^{K}$ is stable
for $\succ $. By Theorem 2, $EA\left( \succ ,\succ ^{\ast }\right) =\hat{\mu}%
^{K}$ is not Pareto dominated by any stable matching $\mu $ for $\succ $.
Therefore, $EA\left( \succ ,\succ ^{\ast }\right) $ is constrained efficient
for $\succ $. \textbf{Q.E.D.}

\subparagraph*{Proof of Theorem 3}

First, we prove the following result.

\begin{lemma}
Let $\mu $ be a stable matching for $\succ $ and $\phi $ be a stable
improvement cycle of $\mu $. Then, $\phi \circ \mu $ is also stable for $%
\succ $.
\end{lemma}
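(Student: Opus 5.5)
Let $\nu=\phi\circ\mu$ with $\phi=\{i_1i_2,\dots,i_qi_{q+1}\}$. The plan is to check the three parts of stability for $\nu$ separately, relying only on the asymmetry of $\succ_s$ and the stability of $\mu$.

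\textbf{Individual rationality and non-wastefulness.} For $l=1,\dots,q$, condition (ii) gives $i_l\in X_{i_{l+1}}(\mu,\succ)\subseteq D_{i_{l+1}}(\mu)$. Hence $\nu(i_l)=\mu(i_{l+1})P_{i_l}\mu(i_l)$, and all other students keep their assignments. So $\nu$ Pareto dominates $\mu$. The Reshuffling Lemma (Lemma 2) then shows that $\nu$ is individually rational and non-wasteful. One can also see this directly: each school in the cycle loses one student and gains one, so $|\nu(s)|=|\mu(s)|$ for every $s$.

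\textbf{Fairness.} Suppose, for contradiction, that $sP_k\nu(k)$, $j\in\nu(s)$, and $(k,j)\in\succ_s$. Since $\nu(k)R_k\mu(k)$, we get $sP_k\mu(k)$.

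If $j$ is not on the cycle, then $j\in\mu(s)$, so $\mu$ already violates the priority of $k$ over $j$ at $s$. This contradicts the stability of $\mu$.

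If $j=i_l$ for some $l$, then $s=\mu(i_{l+1})$. Because $sP_k\mu(k)$, we have $k\in D_{i_{l+1}}(\mu)$. Together with $(k,i_l)\in\succ_s$, this puts $i_l$ in $Y_{i_{l+1}}(\mu,\succ)$, which contradicts $i_l\in X_{i_{l+1}}(\mu,\succ)$.

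One detail needs checking in the second case: $k$ must differ from $i_l$. This follows from the irreflexivity implied by asymmetry, since $(k,i_l)\in\succ_s$ rules out $k=i_l$. If $k$ is itself on the cycle, the argument is unchanged, because $k\in D_{i_{l+1}}(\mu)$ was derived from $\mu(k)$, not from $\nu(k)$.

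No transitivity of $\succ_s$ is used anywhere. The main point to get right is this second fairness case, specifically ensuring that membership in $D$ is taken with respect to the original matching $\mu$.

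Theorem 3 then follows. By Remarks 1 and 2, $\mu^1=DA(G^{\ast})$ is stable for $\succ$. Applying the lemma above at each round $t\geq 2$ shows by induction that every $\mu^t$ is stable. In particular, every terminal matching in $\mathcal{SI}(\succ,\succ^{\ast})$ is stable. Termination is guaranteed because each cycle yields a strict Pareto improvement over a finite set of matchings.
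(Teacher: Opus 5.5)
Your proof is correct and follows the paper's argument essentially step for step. The Reshuffling Lemma gives individual rationality and non-wastefulness, and fairness splits into two cases. If $j$ is off the cycle, the violation already occurs under $\mu$. If $j=i_l$, then $k\in D_{i_{l+1}}(\mu)$ forces $i_l\in Y_{i_{l+1}}(\mu,\succ)$. Your extra remarks make explicit what the paper leaves implicit: that $\nu$ Pareto dominates $\mu$, which is needed for the Reshuffling Lemma, and that only asymmetry is used, as the paper's footnote to Theorem 3 also notes.
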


\textbf{Proof.} Let $\phi =\left\{ i_{1}i_{2},\cdots ,i_{q}i_{q+1}\right\} $
and $\nu =\phi \circ \mu $. Since $\mu $ is individually rational and
nonwasteful, the Reshuffling Lemma implies that $\nu $ is also individually
rational and nonwasteful.

We show that $\nu $ is fair for $\succ $. Suppose not; that is, there are $%
i,j,s$ such that $sP_{i}\nu \left( i\right) ,\ j\in \nu \left( s\right) ,$
and $\left( i,j\right) \in \succ _{s}$. Then, $\nu \left( i\right) R_{i}\mu
\left( i\right) $ is satisfied and thus $sP_{i}\mu \left( i\right) $.

First, suppose $j\notin \left\{ i_{1},\cdots ,i_{q}\right\} $. Then, $\nu
\left( j\right) =\mu \left( j\right) =s$ and $sP_{i}\mu \left( i\right) ,$
and $\left( i,j\right) \in \succ _{s}$. These facts contradict that $\mu $
is fair for $\succ $.

Second, suppose $j\in \left\{ i_{1},\cdots ,i_{q}\right\} $. Without loss of
generality, we assume $j=i_{1}$. Then, $j\in X_{i_{2}}\left( \mu ,\succ
\right) $ and $s$ $=\mu \left( i_{2}\right) $. Since $sP_{i}\mu \left(
i\right) $, $i,j\in D_{i_{2}}\left( \mu \right) $. Moreover, since $\left(
i,j\right) \in \succ _{s}$, $j\in Y_{i_{2}}\left( \mu ,\succ \right) ,$ but
this contradicts $j\in X_{i_{2}}\left( \mu ,\succ \right) $. \textbf{Q.E.D.}

First, since $\succ \in \mathcal{A}^{\left\vert S\right\vert }$, by Remarks
1 and 2, $\mu ^{1}$ is stable for $\succ $. By Lemma 6, we can inductively
show that $\mu ^{2},\mu ^{3}\cdots $ are also stable for $\succ $. \textbf{%
Q.E.D.}

\end{document}